\newcommand{\negA}{\vspace{-0.0in}}  
\newcommand{\negB}{\vspace{-0.0in}}  
\newcommand{\negC}{\vspace{-0.0in}}  
\newcounter{algsubstate}
\newtheorem*{claimstar}{Claim}
\title{Bipartite Matching with Pair-Dependent Bounds}
\author{
  Shaul Rosner\thanks{Tel Aviv University. \texttt{srosner@tauex.tau.ac.il}. 
    ORCID: \href{https://orcid.org/0009-0006-4671-2566}{0009-0006-4671-2566}}
  \and
  Tami Tamir\thanks{Reichman University. \texttt{tami@runi.ac.il}. 
    ORCID: \href{https://orcid.org/0000-0002-8409-562X}{0000-0002-8409-562X}}
}
\date{}
\newtheorem{theorem}{Theorem}
\newtheorem{corollary}[theorem]{Corollary}
\newtheorem{claim}[theorem]{Claim}
\newtheorem{definition}[theorem]{Definition}
\newtheorem{observation}[theorem]{Observation}
\newcommand{\set}[1]{\{ #1 \}}
\newcommand\I{{\cal{I}}}
\begin{document}
\maketitle
\begin{abstract}


Let $G=(U \cup V, E)$ be a bipartite graph, where $U$ represents jobs and $V$ represents machines. We study a new variant of the bipartite matching problem in which each job in $U$ can be matched to at most one machine in $V$, and the number of jobs that can be assigned to a machine depends on the specific jobs matched to it.
These pair-dependent bounds reflect systems where different jobs have varying tolerance for congestion, determined by the specific machine they are assigned to.

We define a {\em bipartite PD-matching} as a set of edges $M \subseteq E$ that satisfies these job-to-machine tolerance constraints. This variant of matching extends well-known matching problems, however, despite its relevance to real-world systems, it has not been studied before. We study bipartite PD-matchings with the objective of maximizing the matching size. As we show, the problem exhibits significant differences from previously studied matching problems. We analyze its computational complexity both in the general case and for specific restricted instances, presenting hardness results alongside optimal and approximation algorithms.

\end{abstract}
\thispagestyle{empty}
\newpage
\fi

\setcounter{page}{1}
\section{Introduction}
Finding a maximum matching in a bipartite graph is one of the most fundamental problems in combinatorics and graph theory. 
Formally, a bipartite graph is a graph $G=(U \cup V, E)$ in which
$E \subseteq U \times V$. A matching in $G$ is a set of edges, $M \subseteq E$, such that each vertex in $U \cup V$ is an endpoint of at most one edge in $M$. In other words, each vertex in $U$ is matched with at most one vertex in $V$ and vice-versa.
A maximum matching is a matching with maximal cardinality. 

Bipartite matching has diverse practical uses (\cite{FlowsBook},
Section 12.2). Most notably, bipartite graphs are used to model assignments of jobs to machines. Specifically, the $U$-vertices correspond to jobs, the $V$-vertices correspond to machines, and there is an edge $(j,i)$ if and only if machine $i$ is capable of processing job $j$. 



Our work is motivated by scheduling environments in which all jobs assigned to a machine are processed in parallel using time-multiplexing. 
In such an environment, jobs are associated with {\em machine-dependent tolerances}. That is, for every job $j$ and machine $i$, we are given the {\em congestion tolerance} of $j$ if processed by machine $i$.

Consider for example a large-scale online education platform that schedules live mentoring sessions between students (jobs) and instructors (machines). Each instructor can teach several students simultaneously, however, the quality of interaction and learning decreases as more students are assigned to the same instructor.
To preserve learning effectiveness, each student has a tolerance: a maximum number of students they are willing to share the session with, depending on the instructor. For instance, an experienced instructor may be more engaging and thus tolerable even in a larger group, while students may prefer more individualized attention from less experienced mentors.

As another example, consider a cloud computing environment in which virtual machines (VMs) are deployed on physical servers that use time-sharing to run multiple VMs in parallel. The performance experienced by each VM depends not only on the number of VMs sharing the server but also on the specific server’s capabilities. For example, high-performance servers can support more concurrent VMs without significant degradation, while less powerful servers offer acceptable performance only under lighter loads. Each VM has a server-dependent tolerance—representing the maximum number of co-located VMs it can share a server with while meeting its performance needs. The goal is to assign VMs to servers in a way that respects these tolerances and maximizes the number of successfully deployed VMs.

%
%
These examples illustrate that machine-related tolerances naturally arise in real-life applications, motivating our study beyond its theoretical interest.

We model the above job-to-machine assignment problem as a new variant of bipartite matching:{\em bipartite matching with pair-dependent bounds} (PD-matching). In this model, every $U$-vertex can be matched at most once, and the number of times a $V$-vertex can be matched depends on which vertices it is matched to. 
Recall that if $k$ unit-length jobs are assigned to a machine that employs time-sharing processing, they all experience congestion $k$. Consequently, if a job $j \in U$ has tolerance $k$ for machine $i \in V$, then in every PD-matching that includes the edge $(j,i)$, the degree of $i$ cannot exceed $k$. 

The maximum PD-matching problem is therefore equivalent to the fundamental problem of maximizing the throughput, given by the number of satisfied jobs, in a scheduling environment with time-sharing processing and machine-dependent tolerances. Surprisingly, to the best of our knowledge, this problem
has not been previously studied. As we show, the problem exhibits significant differences from previously studied scheduling problems, as well as known variants of bipartite matching.

We identify specific classes of instances where a maximum PD-matching can be computed in polynomial time, as well as classes where the problem is computationally hard. An instance is characterized by the $|U| \times |V|$ tolerance matrix. We examine various properties of this matrix, such as monotonicity, limited number of values, or limited row-types, and analyze the computational complexity of PD-matching for instances in the corresponding class. 
Beyond its theoretical significance, our analysis provides insights into the effects of time-sharing processing and machine-dependent tolerances on the systems throughput.

\section{Problem Statement and Preliminaries}
\label{sec:model}

Let $G=(U\cup V, E)$ be a simple complete bipartite graph with edge set $E = U \times V$. 
Motivated by a job-to-machine assignment problem, we refer to the vertices of $U$ as jobs, and to the vertices of $V$ as machines. An instance of bipartite matching with {\em pair-dependent} bounds (PD-matching, for short) is given by $I=\langle n,m, \set{b(j,i)}_{1 \le j \le n, 1 \le i\le m}  \rangle$, where $n=|U|$ and $m = |V|$, and for every $(u,v) \in U \times V$, $b(j,i) \ge 0$ is the {\em tolerance} of job $j$ on machine $i$. Note that $G$ is a complete bipartite graph, however, by setting $b(j,i)=0$, it is possible to prevent an assignment of $j \in U$ to $i \in V$.

A {\em PD-matching} is a subset of edges $M \subseteq E$ that fulfill the pair-dependent bounds. Formally, for every vertex $i \in V$, let $\Gamma_i(M)$ be the set of $U$-vertices matched to vertex $i$ in $M$, that is, $j \in \Gamma_i(M)$ iff $(j,i) \in M$. The degree of vertex $i$ in $M$, denoted by $d_i(M)$, is the number of $U$-vertices matched to $i$, that is, $d_i(M)=|\Gamma_i(M)|$. In a PD-matching $M \subseteq E$, each vertex $j \in U$ is incident with at most one edge in $M$, and additionally, for every $(j,i) \in M$, $d_i(M) \leq b(j,i)$. 
We consider the problem of maximizing $|M|$, the cardinality of the PD-matching. In other words, we attempt to find a set of edges $M$ with a maximum number of matched nodes from $U$, without exceeding their tolerance constraints.


Some of our results refer to restricted classes of instances.
An instance is {\em monotonous} if each row and each column in the tolerance matrix is non-decreasing. That is, the $U$-vertices as well as the $V$-vertices can be ordered such that 
for every $1 \leq j < j' \leq n$ and $1 \leq i < i' \leq m$, it holds that $b(j,i) \leq b(j, i')$ and $b(j,i) \leq b(j',i)$. Practically, monotonicity implies that the jobs as well as the machines can be ordered according to their tolerance and capability. We denote by $\I_{mono}$ the class of monotonous instances. 

Other classes of instances, which are motivated by real-life applications, include instances with $U$-dependent or $V$-dependent tolerances. 
%
The class $\I^{Udep}$ includes instances in which every job $j \in U$ is associated with a single parameter $b_j$ and a set $V_j \subseteq V$ of machines that are capable of processing it. 
Formally, an instance is in $\I^{Udep}$ if, for all $j \in U$ and $i \in V$, $b(j,i) \in \{0, b_j\}$ where $b(j,i) =b_j$ iff $i \in V_j$. Practically, this class corresponds to environments with restricted assignment and machine-{\em in}dependent tolerances.
Similarly, the class $\I^{Vdep}$ includes instances in which every machine $i \in V$ is associated with a single parameter $b_i$ and a set $U_j \subseteq U$ of jobs it can process. For all $i \in V$ and $j \in U$, $b(j,i) \in \{0, b_i\}$ where $b(j,i) =b_i$ iff $i \in U_j$. 
Note that the classes $\I^{Udep}$ and $\I^{Vdep}$ are not equivalent, since in a PD-matching, every $u \in U$ is matched to at most one $V$-vertex, while a node $v \in V$ may be matched to multiple $U$-vertices.


\subsection {Our Results}
In this paper, we introduce and study the problem of computing a maximum PD-matching in a bipartite graph. This problem can be interpreted as maximizing a system's throughput in the presence of job-machines dependent tolerances. We analyze its computational complexity both in the general case and for specific restricted instances, presenting hardness results alongside optimal and approximation algorithms.

In Section~\ref{sec:app}, we analyze the performance of natural greedy approaches. A PD-matching $M$ is {\em maximal} if, for any edge not in $M$, we have that $M \cup \{e\}$ is not a PD-matching. Unlike classical bipartite matching problems where every maximal matching is a $\frac 1 2$-approximation to a maximum one, we show that with vertex-dependent bounds, maximal matchings do not provide a constant approximation. On the other hand, we define {\em strongly-maximal} PD-matching, that can be computed by simple greedy algorithms and provide a $\frac 1 2$-approximation. This discussion clarifies the difficulties inherent in problem of a maximum PD-matching.

Recall that an instance of PD-matching is {\em monotonous} if the $U$-vertices and the $V$-vertices can be ordered, respectively, according to their tolerance and capability. 
In Section~\ref{sec:hardM} we prove that computing a maximum PD-matching is NP-hard even for monotonous instances. We believe that this result is counterintuitive and unexpected, since monotonicity is a very strong assumption that is often associated with the existence of a straightforward optimal solution derived through exchange arguments.

In Section \ref{sec:dep} we analyze the classes $\I^{Udep}$ and $\I^{Vdep}$ of instances with $U$-dependent and $V$-dependent bounds.
We first show that computing a maximum PD-matching of an instance in $\I^{Vdep}$ can be reduced to a $b$-matching problem, and therefore, can be done efficiently.
For classes of instances with $U$-dependent bounds, we present two optimal algorithms, and a hardness result. The first optimal algorithm is for instances in $\I^{Udep} \cap \I_{mono}$, and the second is for instances in which for every $u \in U$, we have $V_u=V$. 
Note that the special case of $\I^{Udep}$ where for all $j$ we have $b_j=1$, is the classical bipartite maximum matching problem, which is solvable efficiently~\cite{HK73, CK24}.
We prove that if the $U$-dependent job tolerances are allowed to be in $\{1,2\}$, then the problem becomes APX-hard.

In Section~\ref{sec:restricted} we study additional restricted classes of instances. $(i)$ We provide a polynomial-time algorithm for instances with a constant number of machines. $(ii)$ We study instances with a limited set, $\mathcal{T}$, of job tolerances. We prove the problem is polynomially solvable if $|\mathcal{T}| = 1, \mathcal{T} = \{0, k\}$, or $\mathcal{T}=\{1,2\}$, and is APX-hard otherwise. For monotonous instances, a maximum PD-matching can be computed in polynomial time if $|\mathcal{T}| \leq 3$. $(iii)$ we provide a polynomial-time algorithm for instances with a constant number of job types, where two jobs $j_1, j_2$ have the same type iff for every machine $i \in V$, $b(j_1,i)=b(j_2,i)$.

We conclude in Section \ref{sec:conclusions}, with a discussion and some directions for future work.


\subsection{Related Work}
Matching is one of the most fundamental concepts in combinatorics and graph theory. The basic maximum bipartite-matching problem can be solved efficiently (e.g., in time $O(\sqrt n m)$ by Hopcroft-Karp algorithm~\cite{HK73}, $O(n^{2+o(1)})$ by Chuzhoy and Khanna~\cite{CK24}, and $O(m^{1+o(1)}$ as recently shown by Chen et. al~\cite{CK+23}), as well as numerous well-studied variants~\cite{Lov09}. In particular, in a $b$-matching~\cite{Ans87,KS95}, a function $b:V \rightarrow \mathbf{N}$ is given, and a matching is valid if every vertex has degree at most $b(v)$. 
In a semi-matching~\cite{semi,FLN14}, all vertices in $U$ have degree $1$ and the goal is to balance the load on the $V$-vertices. 
Other variants of bipartite matching, in which vertex degrees are limited to a specific set of values, or when the matching edges should satisfy certain properties are known to be NP-hard~\cite{DP17,TIR78}.
Conflict-aware bipartite $b$-matching, in which certain vertices may not be matched to the same vertex, is analyzed and shown to be NP-hard in~\cite{Chen16}. 

Recall that maximum PD-matching corresponds to the problem of maximizing the throughput in a scheduling problem of assigning unit-length jobs that have machine-dependent tolerances. The tolerance can be viewed as the job's allowed due-date on the machine assuming time-sharing processing.
Scheduling theory is a very well studied field, dating back to the early 1950's~\cite{GLLR79}. 
When jobs are processed sequentially by the machines, the maximum  throughput problem is well-studied. For a single machine, Moore-Hodgson's algorithm (\cite{Moo68}) solves the problem optimally. For multiple machines, the problem is NP-hard even with preemptions allowed~\cite{Law83}.
When jobs have equal-lengths, an optimal solution can be produced using max-flow techniques~\cite{BCS74,Horn73}.
A variant of sequential scheduling with machine-dependent due-dates is discussed in \cite{Hall86,MMS25}. In their setting, each machine has a sorted list of due-dates, and the $i$'th job assigned to a machine has the $i$'th due-date.


The algorithms community has devoted a lot of attention to variants of packing and assignment problems. Maximum PD-matching is a special case of the most general variant, denoted SAP (separable assignment problem)~\cite{FGMS11}. However, it is inherently different from other variants such as GAP (general assignment problem)~\cite{ST93,CK05} and DCP (distributed caching problem)\cite{FGMS11, ST03}.
A SAP problem is defined by a set of bins and a set of items to pack in each bin, a value, $f_{ji}$, for assigning item $j$ to bin $i$; and a separate packing constraint for each bin, i.e., for bin $i$, a family ${\cal U}_i$ of subsets of items that fit in bin $i$. The goal is to pack items into bins to maximize the aggregate value.
Given a $\beta$-approximation algorithm for the single-bin problem, \cite{FGMS11} presents a polynomial-time LP-rounding based $(1-\frac 1 e)\beta$-approximation algorithm for SAP.

An instance of maximum PD-matching can be viewed as an instance of SAP in which every $U$-vertex is an item, and every $V$-vertex is a bin. For every bin $i$ and set $W \subseteq U$, the set $W$ can fit in bin $i$, that is, $W \in {\cal U}_i$, if and only if $|W| \le min_{j \in W} b(j,i)$. 
The single-bin problem is solved optimally, by assigning to machine $i$ the maximal number $k$ of jobs for which $b(j,i) \ge k$. This reduction implies that the approximation algorithm of~\cite{FGMS11} can be applied with $\beta=1$ to yield a $(1-\frac 1 e)$-approximation to maximum PD-matching.

\section{Analysis of Natural Greedy Techniques}
\label{sec:app}
In the most natural greedy approach, edges are added to the matching as long as possible. Formally, a PD-matching $M$ is {\em maximal}, if, for any $e \in E \setminus M$, it holds that $M \cup \{e\}$ is not a PD-matching.
A well-known result regarding the classical bipartite matching problem is that every maximal matching is a $\frac 1 2$-approximation to the maximum matching problem~\cite{Lov09}. 

We show that with arbitrary pair-dependent constraints, maximal PD-matchings may have a poor quality. Specifically, given $r>1$, consider an instance $I_r$ with $|U|=r+1$, $|V|=1$, and the following bounds:  
For the first job, let $b(1,1)=1$, while for each of the other $r$ jobs, let $b(j,1)=r$. It is easy to verify that the edge $(1,1)$ is a maximal PD-matching, while the optimal solution includes the $r$ edges $\{(j,1)~|~2 \le j \le r+1\}$. Thus, a maximal PD-matching may be smaller than a maximum one by factor $r$.

In a maximal PD-matching, for any $e=(j,i) \in E \setminus M$, it holds that $M \cup \{e\}$ is not a valid PD-matching. This invalidity may be either since $M \cup \{e\}$ is not feasible for $j$, specifically, $j$ is already matched or $d_i(M) +1 > b(j,i)$, or since $M \cup \{e\}$ is feasible for $j$, however, the addition of $(j,i)$ will violate the tolerance-constraint of some $j' \neq j$ assigned to $i$. We refine the definition of maximal PD-matching, distinguishing between the above cases.

\begin{definition}
A PD-matching $M$ is {\em strongly-maximal}, if, for any unmatched job $j\in U$ and any machine $i \in V$, it holds that $b(j,i) \le d_i(M)$.    
\end{definition} 
Clearly, any strongly-maximal PD-matching is maximal, however, as demonstrated in $I_r$ above, a maximal PD-matching may not be strongly-maximal. 
We show that every strongly-maximal PD-matching is a $\frac 1 2$-approximation, and then present a simple greedy algorithm for computing one.

\begin{claim}
\label{cl:2approx}
Every strongly-maximal PD-matching $M$ is a $\frac 1 2$-approximation.
\end{claim}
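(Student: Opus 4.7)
The plan is to compare a strongly-maximal PD-matching $M$ with an optimal PD-matching $M^*$ by decomposing $M^*$ according to which of its jobs are matched in $M$. Let $A$ be the set of jobs matched in both $M$ and $M^*$ (possibly to different machines), and let $B$ be the set of jobs matched in $M^*$ but unmatched in $M$. Since every job appearing in $M^*$ lies in $A \cup B$, we have $|M^*| = |A| + |B|$, and clearly $|A| \le |M|$.

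The key step uses strong-maximality on each job $j \in B$. Let $i$ be the machine to which $j$ is matched in $M^*$. Feasibility of $M^*$ gives $d_i(M^*) \le b(j,i)$, and since $j$ is unmatched in $M$, the defining property of strong-maximality gives $b(j,i) \le d_i(M)$. Chaining these yields $d_i(M^*) \le d_i(M)$ for every machine $i$ that hosts at least one $B$-job in $M^*$.

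Next I would introduce $S \subseteq V$, the set of machines that host at least one job from $B$ in $M^*$, and split
\[
|M^*| \;=\; \sum_{i \in S} d_i(M^*) \;+\; \sum_{i \notin S} d_i(M^*).
\]
For $i \in S$ the previous step gives $d_i(M^*) \le d_i(M)$, so the first sum is at most $\sum_{i \in S} d_i(M) \le |M|$. For $i \notin S$, every job matched to $i$ in $M^*$ lies in $A$ (there are no $B$-jobs at such machines by definition of $S$), so the second sum is at most $|A| \le |M|$. Adding the two bounds gives $|M^*| \le 2|M|$, which is the claim.

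The only subtle point — and what I expect to be the one place where a beginner might slip — is recognizing that a direct edge-to-edge swapping argument (familiar from classical matching) does not work here, because a single machine can absorb many jobs and the degrees under $M$ and $M^*$ may differ. The decomposition via the set $S$ bypasses this: machines in $S$ are controlled by the strong-maximality inequality at the level of total degree, while machines outside $S$ are controlled trivially through $A$. No further case analysis is needed.
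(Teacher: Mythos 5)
Your proof is correct and rests on exactly the same key inequality as the paper's: for a job $j$ unmatched in $M$ but matched to $i$ in $M^*$, feasibility gives $d_i(M^*) \le b(j,i)$ and strong-maximality gives $b(j,i) \le d_i(M)$. The paper packages this as a proof by contradiction (finding one machine whose degree must increase), while you run the accounting directly over the set $S$; the substance is the same.
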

\begin{proof}
Let $M$ be a PD-matching of size $k$ and assume towards contradiction that for a maximum cardinality PD-matching $M^*$, it holds that $|M^*|\ge 2k+1$. This implies that there is a set $W$ of at least $k+1$ jobs that are matched in $M^*$ but not in $M$.

In $M^*$, the jobs in $W$ are matched to a set of machines $V_W$ with total degree at least $|W|=k+1$. In $M$, the total degree of the machines in $V_W$ is at most $k$. Thus, for at least one machine ${i_1} \in V_W$, $d_{i_1}(M)<d_{i_1}(M^*)$. Let $j\in W$ be a job assigned to $i_1$ in $M^*$. Such a job exists since $i_1 \in V_W$. It holds that $d_{i_1}(M^*) \leq b(j, i_1)$. Therefore, $d_{i_1}(M)+1 \leq b(j, i_1)$, and $M$ is not strongly-maximal.
\end{proof}

Next, we provide an efficient algorithm for computing a strongly-maximal PD-matching.

\begin{algorithm}[ht]
\caption{An algorithm for strongly-maximal PD-matching}
\begin{algorithmic}[1]
\label{alg:greedy}
\STATE $M= \emptyset$.
\STATE Consider the machines in any order.
\FOR {$i=1$ to $m$}
\STATE Let $k$ be the maximal value such that there are $k$ unmatched jobs with $b(j,i)\geq k$
\STATE Let $A$ be the set of $k$ unmatched jobs with highest $b(j,i)$ value.
\STATE For every $j \in A$, add $(j,i)$ to $M$.
\ENDFOR
\end{algorithmic}
\end{algorithm}

\negB\begin{theorem}
    \label{thm:2-approx}
    Algorithm \ref{alg:greedy} provides a $\frac 1 2$-approximation to the maximum PD-matching.
\end{theorem}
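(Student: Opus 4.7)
The plan is to invoke Claim~\ref{cl:2approx}: since every strongly-maximal PD-matching is a $\frac{1}{2}$-approximation, it suffices to verify that Algorithm~\ref{alg:greedy} terminates with a PD-matching $M$ that is strongly-maximal. Feasibility of $M$ is immediate from the construction: when machine $i$ is processed, the chosen set $A$ has cardinality $k$ with every $j \in A$ satisfying $b(j,i) \ge k = d_i(M)$, so the tolerance constraint is met for $i$; no job is ever added twice, since $A$ is drawn from unmatched jobs.

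The core step is the strongly-maximal property: for every unmatched job $j$ and every machine $i$, we need $b(j,i) \le d_i(M)$. I would fix such a pair $(j,i)$ and examine the iteration in which $i$ was processed. Since a job once matched stays matched, $j$ was unmatched at that moment as well, so $j$ was a candidate for assignment to $i$. Let $k$ be the value chosen by the algorithm in that iteration, so $d_i(M) = k$.

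Assume for contradiction that $b(j,i) \ge k+1$. By the maximality of $k$, the number of unmatched jobs (at that moment) with tolerance at least $k+1$ on $i$ is strictly less than $k+1$, i.e., at most $k$. Since $j$ itself is such a job, the set of unmatched jobs with $b(\cdot,i) \ge k+1$ has size between $1$ and $k$, and all of its members have $b(\cdot,i) > k$, hence lie strictly above every job of tolerance exactly $k$. Therefore, when the algorithm picks the top-$k$ unmatched jobs by $b(\cdot,i)$-value, every such job, including $j$, is chosen into $A$. This contradicts $j$ being unmatched, so $b(j,i) \le k = d_i(M)$, establishing strong maximality. Claim~\ref{cl:2approx} then yields the theorem.

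The only subtle point is keeping the bookkeeping straight between (i) the maximal $k$ in the definition of the algorithm and (ii) the top-$k$ selection rule; in particular, one must use both facts together (maximality of $k$ bounds how many jobs of tolerance $> k$ remain, and the top-$k$ rule guarantees those jobs are actually picked) to force the contradiction. No exchange or LP argument is needed.
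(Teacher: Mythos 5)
Your proof is correct and follows the same route as the paper: reduce to Claim~\ref{cl:2approx} by showing Algorithm~\ref{alg:greedy} outputs a strongly-maximal PD-matching. You simply spell out in more detail the step the paper states tersely (why an unmatched $j$ with $b(j,i) \ge k+1$ would have been selected into $A$), and that elaboration is accurate.
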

\begin{proof}
    We prove that the algorithm returns a strongly-maximal PD-matching. By Claim~\ref{cl:2approx}, any such matching is a $\frac{1}{2}$-approximation.
Let $j$ be an unmatched job. 
For every machine $i \in V$, when $i$ is considered by the algorithm, $d_i(M)$ is determined to be the maximal number of unmatched jobs that can be matched to $i$, and the jobs with highest tolerance on $i$ are matched to $i$. Since job $j$ is not among these jobs, it must be that $d_i(M) \ge b(j,i)$. Therefore, for any unmatched job $j$ and any machine $i \in V$, it holds that $b(j,i) \le d_i(M)$.     

For every machine, step 4 of the algorithm requires time $O(n \log n)$, thus, the time complexity of the algorithm is $O(m n\log n)$.
\end{proof}

Alternatively, the algorithm could select in every iteration a job $j$ with a global maximal value of $b(j,i)$ such that $d_i(M) < b(j,i)$, and match it to the corresponding machine. 

We show that the analysis is tight {\em even for instances in $\I_{mono}^{Udep}$}. Consider an instance with $n=2k$ jobs, and $2$ machines. For $1 \le j \le k$, let $V_j=\{2\}$ and $b_j=k$.
For $k< j \le 2k$, let $V_j=\{1,2\}$ and $b_j=k$. The set of edges $M=\{(j,1)| k<j\leq 2k\}$ form a strongly-maximal PD-matching, obtained by Algorithm~\ref{alg:greedy}, with $|M|=k$, however, the maximum PD-matching is  $M^*=\{(j,1)| 1 \le j\leq k\}\cup\{(j,2)|k <j \le 2k\}$ of size $|M^*|=2k$. 


Recall that a better, $(1-\frac 1 e)$-approximation, can be achieved using LP rounding techniques~\cite{FGMS11}. We find the analysis of strongly-maximal PD-matching essential, as it is natural, practical and simple.



\section{Hardness Proof for Monotonous Instances}
\label{sec:hardM}
In this section we show that computing a maximum PD-matching is NP-hard even for extremely nicely-structured instances. Recall that an instance is {\em monotonous} if each row and each column in the tolerance matrix is non-decreasing. That is, 
for every $1 \leq j < j' \leq n$ and $1 \leq i < i' \leq m$, it holds that $b(j,i) \leq b(j, i')$ and $b(j,i) \leq b(j',i)$. Practically, monotonicity implies that the $U$-vertices as well as the $V$-vertices can be ordered according to their tolerance and capability. 

Monotonicity is a feature that supposedly makes the problem easier to solve; indicating a greedy assignment of the jobs may result in an optimal assignment. Indeed, simple exchange argument can be used to show that every monotonous instance has a maximum PD-matching in which only the most tolerant jobs and the most capable machines are matched. Formally, 
\begin{observation}
\label{ob:mono}
 Every monotonous instance has a maximum PD-matching $M^*$ such that
for $j^*=n-|M^*|+1$ and some $i^* \in V$, it holds that $(j,i) \in M^*$ if and only if $j \ge j^*$ and $i \ge i^*$.
\end{observation}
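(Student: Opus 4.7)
The plan is to begin with an arbitrary maximum PD-matching $M^*$ and apply two successive rounds of exchange, first on the job side and then on the machine side, transforming $M^*$ into the claimed ``rectangular'' form without decreasing its cardinality.

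Stage~1 would collect the matched jobs into the suffix $\{j^*,\ldots,n\}$ with $j^* = n - |M^*| + 1$. Suppose the current maximum PD-matching contains an edge $(j,i)$ while some job $j' > j$ is unmatched. I would form $M' = (M^* \setminus \{(j,i)\}) \cup \{(j',i)\}$ and verify that it is a valid PD-matching of the same size: the degree of $i$ is unchanged, so the tolerance constraints of every $j'' \in \Gamma_i(M^*) \setminus \{j\}$ are preserved, and the column-monotonicity of the tolerance matrix yields $b(j',i) \ge b(j,i) \ge d_i(M^*) = d_i(M')$, so $j'$'s constraint is also satisfied. Repeating this swap strictly increases the integer potential equal to the sum of indices of matched jobs, which is bounded above by $n^2$, so the process terminates in a maximum PD-matching whose matched jobs are exactly $\{j^*, j^*+1, \ldots, n\}$.

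Stage~2 would then shift the used machines into a suffix $\{i^*, \ldots, m\}$. Starting from the output of Stage~1, suppose some machine $i$ is used while a higher-indexed machine $i' > i$ is unused. I would move every job in $\Gamma_i(M^*)$ from $i$ to $i'$. The new degree of $i'$ equals $d_i(M^*)$, and the row-monotonicity of the tolerance matrix gives $b(j, i') \ge b(j, i) \ge d_i(M^*)$ for each $j \in \Gamma_i(M^*)$, so every tolerance constraint at $i'$ is satisfied. Crucially, this swap changes only \emph{where} jobs are assigned, not \emph{which} jobs are matched, so the job-side structure from Stage~1 is preserved. Using the analogous integer-valued potential --- the sum of indices of used machines --- this process too terminates, leaving a matching in which the used machines form a suffix $\{i^*,\ldots,m\}$ of $V$.

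The main --- though modest --- obstacle is to verify that Stage~2 does not disturb Stage~1's structure and that both stages really converge to genuine suffixes on each side. Both points are handled above: Stage~2 preserves the set of matched jobs, and the termination condition of each stage forbids any ``gap'' --- an unmatched top job in Stage~1, or an unused machine below a used one in Stage~2. The matching produced thereby satisfies the characterization in the observation, with $i^*$ defined as the smallest index of a used machine upon termination of Stage~2.
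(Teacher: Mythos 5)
Your proof is correct and follows essentially the same two-stage exchange argument as the paper: swap a matched job for a higher-indexed unmatched one using column-monotonicity, then relocate all jobs from a used machine to a higher-indexed unused one using row-monotonicity. The explicit potential-function termination argument is a minor addition of rigor but does not change the approach.
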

\begin{proof}
    Let $M$ be a maximum matching. If $(j_1,i) \in M$, but there is a job $j_2>j_1$ such that $j_2$ is not matched in $M$, consider the matching $M' = (M \setminus \{(j_1, i)\})\cup \{(j_2, i)\}$. From monotonicity, $b(j_2, i) \geq b(j_1, i)$, so $j_2$ is satisfied in $M'$. As the degree of all machines is unchanged, $M'$ is a valid PD-matching, of size $|M|$.

Similarly, consider a machine $i_1$ such that $d_{i_1}(M)>0$ while there is a machine $i_2 > i_1$ such that $d_{i_2}(M) = 0$. Consider a matching $M'= (M\setminus \{(j, i_1)| (j,i_1) \in M\}) \cup \{(j, i_2)| (j,i_1)\in M\}$. From monotonicity, $b(j, i_2) \geq b(j, i_1)$ for every job $j$, so $M'$ is valid, and $|M'|=|M|$.

This process can be repeated until, for $j^* =n-|M|+1$ only the most tolerant $j^*$ jobs are matched to the most capable $i^*$ machines.

\end{proof}

However, as we show, this observation does not lead to an efficient algorithm.

\negB\begin{theorem}
\label{thm:mono_hard}
Finding a maximum PD-matching is NP-hard even for monotonous instances.
\end{theorem}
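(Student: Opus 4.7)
I would prove NP-hardness by a polynomial reduction from a strongly NP-hard numeric packing problem, most naturally $3$-Partition. Given an instance with $3k$ integers $a_1,\ldots,a_{3k} \in (B/4,B/2)$ summing to $kB$, the plan is to build a monotonous PD-matching instance together with a designated target size $K$ such that a PD-matching of size $K$ exists iff the $3$-Partition instance is a yes-instance.

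The construction would use $a_i$ ``unit'' jobs for each integer $a_i$, grouped into $3k$ contiguous \emph{blocks} (so $n=kB$ jobs in total), together with $k$ ``bin'' machines and a small number of auxiliary jobs or machines used only to enforce monotonicity. Jobs would be indexed so that blocks are contiguous and ordered by the originating integer's value, while machines are indexed by their intended bin. The pair-dependent tolerances $b(j,i)$ would form a monotonous staircase engineered so that: (i) reaching matching size $K=kB$ forces every bin machine to carry load exactly $B$; (ii) pair-dependence penalises splitting a block across two bins, forcing each block to lie entirely on one bin; and (iii) once blocks are whole, the range $a_i \in (B/4,B/2)$ forces exactly three blocks per bin, whose $a_i$'s therefore sum to~$B$---a $3$-Partition. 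The forward direction would be immediate by placing each block on its $3$-Partition bin; the reverse would combine (i)--(iii) with a final exchange step on the resulting canonical matching.

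The main obstacle is that PD-matching is a \emph{count}-based constraint whereas $3$-Partition is \emph{sum}-based, and their encoding has to survive both the very restrictive monotonicity requirement and the exchange arguments of Observation~\ref{ob:mono}, which try to collapse any matching into its canonical top-$|M|$ form and threaten to make a naive encoding trivially solvable. In particular, monotonicity forbids an all-or-nothing block choice being implemented by a single tolerance jump, so the block boundary must be spread over a staircase of values; the tolerances at the last job of each block and at the first column of each bin need to be tuned so that any partial assignment of a block forces either a tolerance violation or a load strictly less than $B$ on some bin. If this direct encoding proves too rigid to be made monotonous while avoiding trivialisation by Observation~\ref{ob:mono}, a fallback is to instead reduce from two-machine makespan minimisation (equivalently Partition), using the analogous construction with only two bin-machines and carefully crafted auxiliary jobs, where monotonicity is easier to control.
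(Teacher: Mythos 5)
Your choice of source problem (strongly NP-hard $3$-Partition, exploiting $a_i\in(B/4,B/2)$ to force triples) matches the paper, and you correctly identify the central obstacle: the PD constraint is count-based while $3$-Partition is sum-based, and monotonicity plus the exchange argument of Observation~\ref{ob:mono} collapse naive encodings. But the construction you sketch does not overcome that obstacle, and I do not believe it can. With $k$ bin-machines each forced to degree exactly $B$, the feasibility of placing a job $j$ on bin $i$ depends only on whether $b(j,i)\ge B$; by monotonicity the set of admissible jobs for each bin is a suffix of the job order, and these suffixes are nested across bins. So any $B$ admissible jobs fill a bin equally well --- there is no mechanism by which ``pair-dependence penalises splitting a block across two bins,'' since the constraint sees only the count on the machine and the minimum tolerance there, never which block a job came from. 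Your item (ii) is exactly the missing gadget, and under equal target degrees the residual problem is a nested-suffix assignment solvable greedily, leaving nothing NP-hard to encode. Your fallback is also broken: a reduction from Partition that creates $a_i$ unit jobs per integer is pseudo-polynomial in the binary input size, and Partition is only weakly NP-hard, so that route proves nothing.

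The paper escapes this by inverting the roles. The $3k$ integers become the $3k$ \emph{machines}, the $k$ partition classes become $k$ job \emph{types}, and the tolerances are $b(j,i)=\ell\cdot x_i$ for a job of type $\ell$ (with $\ell B$ jobs of type $\ell$, and the $x_i$ sorted so the matrix is monotonous). Now the subset-sum structure lives in the machine capacities rather than in block integrity: matching all $\ell B$ jobs of type $\ell$ requires a set $S_\ell$ of machines with $\sum_{i\in S_\ell} x_i\ge B$, each $x_i<B/2$ forces $|S_\ell|\ge 3$, and global tightness $\sum_i x_i=kB$ forces the $S_\ell$ to be disjoint triples each summing exactly to $B$. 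Nothing needs to be unsplittable, so the construction is immune to the exchange arguments that trivialise yours. If you want to salvage your proposal, this role reversal is the idea you are missing.
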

\begin{proof}
We present a reduction from the $3$-partition problem. The input for $3$-partition is a set of $3k$ integers $A=\{x_1, \dots, x_{3k}\}$, such that $\sum_{\ell=1}^{3k}x_\ell=kB$, and $\frac{B}{4} < x_\ell < \frac{B}{2}$, for all $1 \le \ell \le 3k$. W.l.o.g., we assume the integers in $A$ are sorted in increasing order. The goal is to partition $A$ into $k$ triplets, each with total sum $B$. The $3$-partition problem is known to be strongly NP-hard.

Given an instance of $3$-partition, we construct a monotonous instance of PD-matching, $I=\langle U\cup V, E,\{b(u,v)\} \rangle$ as follows. The set $U$ includes $n=\sum_{\ell=1}^k \ell B=\frac{k(k+1)}{2}B$ vertices (jobs), and the set $V$ includes $3k$ vertices (machines), $i_1, \dots, i_{3k}$.
The jobs are partitioned into $k$ {\em types}, where two jobs $j_1, j_2$ have the same type iff for every machine $i \in V$, $b(j_1,i)=b(j_2,i)$. For every $1 \le \ell \le k$, there are $\ell B$ jobs of type $\ell$. For every job $j$ of type $\ell$, and machine $i$, let $b_(j,i)=x_i \cdot \ell$. Note that $I$ is indeed monotonous.

{\bf Example:} Let $B=100, k=2$, and $A=\{26,30,31,33,36,44\}$ be an input for $3$-partition. The set $A$ has a partition into $A_1=\{26,30,44\}$ and $A_2=\{31,33,36\}$, each sums up to $100$. We construct an input with $3k = 6$ machines and $\sum_{\ell=1}^k \ell B= 100\cdot(1+2)=300$ jobs. For jobs $1 \le j \le 100$ we set $b(j,i)=x_i$, that is, when the tolerances of a job-vertex are represented as a vector of length $m$, $b_j=(26,30,31,33,36,44)$, and for jobs $101 \le j \le 300$ we set $b(j,i)=2x_i$, that is, the tolerance vector is $b_j=(52,60,62,66,72,88)$.

\begin{claim}
The set $A$ has a $3$-partition iff $I$ has a PD-matching of size $n$.
\end{claim}
\begin{proof}
Assume first that $A$ has a partition into triplets $A_1,\dots,A_k$ each having total sum $B$. For each $1 \le \ell \le k$, denote $A_\ell=\{x_{\ell_1}, x_{\ell_2}, x_{\ell_3}\}$. Recall that there are $\ell B$ jobs for which the tolerances of job $j$ are $b_{j}=(\ell\cdot x_1, \dots, \ell \cdot x_{3k})$. We match these jobs to the three machines $i_{\ell_1},i_{\ell_2},i_{\ell_3}$, such that for every $a \in \{1,2,3,\}$, exactly $\ell \cdot x_{\ell_a}$ jobs are matched to machine $a_{\ell_a}$. Since $\ell\cdot x_{\ell_1}+\ell\cdot x_{\ell_2}+\ell\cdot x_{\ell_3} =\ell B$, all jobs of type $\ell$ are matched, and the degree of the machine they are matched to is exactly their tolerance on it.

Back to our example, in a possible matching corresponding to the $3$-partition, the $100$ jobs of $A_1$ are split among machines for which their tolerances are $(26,30,44)$, and the $200$ jobs of $S_2$ split among machines for which their tolerances are $2\cdot(31,33,26)=(62,66,72)$. All $300$ jobs are satisfied. As shown in Figure~\ref{fig:3par}, the PD-matching is not unique, and the crucial point is that for all $\ell$, the $\ell B$ jobs of type $\ell$ are matched to a set of machines corresponding to a triplet.
    \begin{figure}[ht]
\centering
\includegraphics[width=0.9\textwidth]{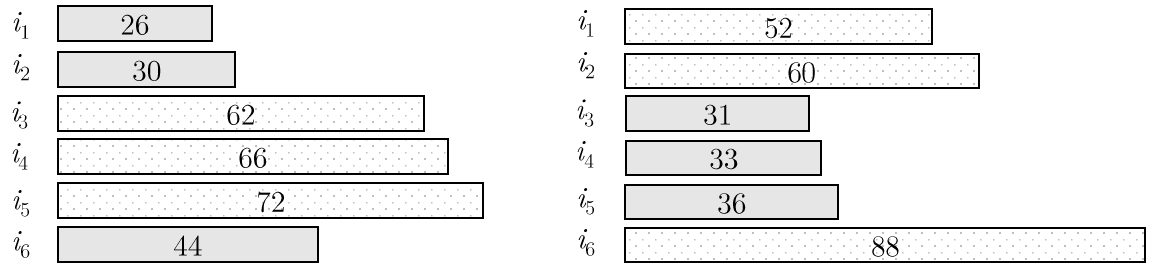} \\
\caption{The only two PD-matchings of size $300$ in the reduction for $A=\{26,30,31,33,36,44\}$. The figure shows the corresponding job assignment. The labels correspond to the degrees of $V$-vertices in the matching, that is, the degree of the machines. Jobs of the same type are matched to machines corresponding to a triplet. Each of the two PD-matchings induces a $3$-partition.}
\label{fig:3par}
\end{figure}

For the other direction of the proof, recall that for every $x \in A$, we have that $\frac{B}{4} < x < \frac{B}{2}$, Therefore, for every $1 \leq \ell \leq k$, jobs of type $\ell$ have a tolerance of less than $\frac{\ell B}{2}$ on each machine. Since there are $\ell B$ jobs of type $\ell$, not all jobs of type $\ell$ can be matched to less than $3$ machines. Since the number of jobs that can be matched to a machine depends on the tolerance of the job with the lowest tolerance that is matched to that machine, and since the integers in $A$ sum up to $kB$, if jobs of any type are matched to more than $3$ machines, there must be some unmatched jobs. Thus, in a PD-matching of size $n$, for each type $\ell$, jobs of each type $\ell$ are matched to exactly $3$ machines, whose total degree is $\ell B$. As each machine corresponds to an element of $A$, a PD-matching of size $n$ induces a $3$-partition.
\end{proof}

Note that $|U|$ is polynomial in $k$ and $B$, and since $3$-partition is NP-hard even if the input is given in unary, our reduction is polynomial. 
\end{proof}
\section{The Classes \texorpdfstring {$\I^{Udep}$ and $\I^{Vdep}$}{of U-dependent and V-dependent Tolerances}}
\label{sec:dep}
In this section we consider instances with $U$-dependent or $V$-dependent tolerances. Recall that $I \in \I^{Udep}$ iff every $j \in U$ is associated with a single tolerance value, $b_j$, and a set $V_j$ of machines such that $b(j,i) \in \{0, b_j\}$ where $b(j,i) =b_j$ iff $i \in V_j$. Similarly, $I \in \I^{Vdep}$ iff every $i \in V$ is associated with a single tolerance value, $b_i$, and a set $U_i$ of jobs such that $b(j,i)\in\{0,b_v\}$, where $b(j,i) =b_i$ iff $j \in U_i$.

It is easy to see that the class $\I^{Vdep}$ is a special case of $b$-matching, where $U$-vertices have bound $1$, $V$-vertices have bound $b_v$, and there is an edge $(j,i)$ if and only if $b(j,i)=b_j$.
Since $b$-matching can be solved efficiently~\cite{Ans87}, we have: 
\begin{theorem}
\label{thm:Vdep}
Computing a maximum PD-matching of $I \in \I^{Vdep}$ can be done efficiently.
\end{theorem}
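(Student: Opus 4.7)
The plan is to give an explicit, capacity-preserving reduction from a PD-matching instance $I \in \I^{Vdep}$ to a standard bipartite $b$-matching instance, and then invoke the known polynomial-time algorithm for $b$-matching (e.g.\ \cite{Ans87}).

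First, I would build the reduced graph $G' = (U \cup V, E')$ on the same vertex set as $G$, but keep only the edges whose endpoints are ``compatible'' under the $V$-dependent tolerances: put $(j,i) \in E'$ if and only if $j \in U_i$ (equivalently, $b(j,i) = b_i > 0$). Assign bound $1$ to every $j \in U$ and bound $b_i$ to every $i \in V$. This is a standard instance of bipartite $b$-matching.

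Next, I would argue the two feasibility notions coincide. Given a PD-matching $M$ in $I$, every edge $(j,i) \in M$ satisfies $b(j,i) > 0$, hence $j \in U_i$, so $(j,i) \in E'$; moreover every $j$ appears in at most one edge of $M$, and for any machine $i$ that is used, $d_i(M) \leq b(j,i) = b_i$ for each $(j,i) \in M$. Thus $M$ is a valid $b$-matching in $G'$ of the same cardinality. Conversely, given a $b$-matching $M'$ in $G'$, every edge lies in $E'$ so its endpoints satisfy $b(j,i) = b_i$; the per-vertex bounds $d_j(M') \leq 1$ and $d_i(M') \leq b_i$ translate directly into the PD-matching constraint $d_i(M') \leq b_i = b(j,i)$ for every $(j,i) \in M'$. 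Hence a maximum $b$-matching in $G'$ is exactly a maximum PD-matching in $I$.

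Finally, since bipartite $b$-matching is solvable in polynomial time \cite{Ans87} and the reduction above is trivially polynomial (it only inspects the $|U| \cdot |V|$ entries of the tolerance matrix), the theorem follows. I do not expect any substantive obstacle here: the whole point of the class $\I^{Vdep}$ is that the tolerance at machine $i$ is the same for every job that may be assigned to it, which makes the pair-dependent constraint degenerate into a purely machine-side capacity constraint. The only thing worth stating carefully is the equivalence of feasibility in both directions, which is what the short argument above establishes.
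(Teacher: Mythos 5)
Your proposal is correct and follows essentially the same route as the paper: the paper also observes that $\I^{Vdep}$ reduces to bipartite $b$-matching with bound $1$ on $U$-vertices, bound $b_i$ on $V$-vertices, and an edge exactly when the tolerance is positive, and then invokes the polynomial algorithm of \cite{Ans87}. Your write-up merely spells out the two-directional feasibility equivalence that the paper leaves as ``easy to see.''
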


For the class $\I^{Udep}$, we present three different results.
Our first result, which is simple and positive, refers to instances of identical machines, in which every job can be assigned to all machines. Formally, for every $u \in U$ and $v \in V$, $b(u,v)=b_u$.
\begin{theorem}
\label{thm:UdepComplete}
If for all $u \in U$, it holds that $V_u=V$, then it is possible to compute a maximum PD-matching of $I \in \I^{Udep}$ in time $O(n \log n)$.
\end{theorem}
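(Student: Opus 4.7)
The plan is to present a greedy one-pass algorithm and to prove its optimality via two exchange reductions followed by an inductive dominance bound. I expect the main work to be the inductive argument; everything else is structural.

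First I would establish, by a standard swap, that some maximum PD-matching of size $s$ uses the $s$ jobs with the largest $b_j$ values: if a job $j$ is matched on machine $i$ and some $j'$ with $b_{j'} > b_j$ is unmatched, replacing $j$ by $j'$ on machine $i$ preserves validity since $d_i(M) \le b_j \le b_{j'}$. A second exchange shows that, after sorting jobs so that $b_1 \ge b_2 \ge \cdots \ge b_n$ and labeling the machines so their loads satisfy $l_1 \ge l_2 \ge \cdots \ge l_t$ (with $t \le m$), one may further assume that machine $i$ holds exactly the jobs indexed $L_{i-1}+1, \ldots, L_i$, where $L_i := \sum_{r \le i} l_r$. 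A pairwise swap between two machines in which the heavier-loaded one takes the higher-$b$ job only tightens the min-tolerance constraint where the load is smaller and loosens it where the load is larger, so feasibility is preserved.

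The algorithm: sort jobs so that $b_1 \ge \cdots \ge b_n$, open one machine, and scan the sorted list. Add each job $j$ to the currently open machine if the new load $k$ still satisfies $k \le b_j$; otherwise close the current machine and open a new one containing $j$, provided fewer than $m$ machines are open (else discard $j$ and all subsequent jobs). Let $g_1, g_2, \ldots$ be the resulting group sizes and $G_i := \sum_{r \le i} g_r$. By construction, $g_i$ is the largest $k$ for which $b_{G_{i-1}+k} \ge k$; in particular $b_{G_i + 1} \le g_i$, which forces $g_{i+1} \le g_i$. Sorting takes $O(n \log n)$ and the scan takes $O(n)$, for a total of $O(n \log n)$.

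For optimality, I would prove by induction on $i$ that $L_i \le G_i$ for every $i \le t$. The base case uses $l_1 \le b_{l_1}$ and the definition of $g_1$ as the largest $k$ with $b_k \ge k$. For the step, assume $L_{i-1} \le G_{i-1}$ and suppose towards contradiction that $L_i \ge G_i + 1$. Since $b$ is non-increasing and by the maximality of $g_i$, $b_{L_i} \le b_{G_i + 1} \le g_i$; hence $l_i \le b_{L_i} \le g_i$, and combining with the inductive hypothesis gives $L_i = L_{i-1} + l_i \le G_{i-1} + g_i = G_i$, contradicting $L_i \ge G_i + 1$. Taking $i = t$ yields $s = L_t \le G_t$, and with the convention $g_r = 0$ for $r$ exceeding the number of groups opened, this bound is exactly $\min(G_m, n)$, which the greedy attains. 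The hard part will be setting up this inductive dominance cleanly when the optimal loads are allowed to be strictly smaller than the greedy's at some coordinates; once that is in place the rest is routine.
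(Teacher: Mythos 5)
Your algorithm is identical to the paper's (sort the jobs by non-increasing tolerance and greedily fill each machine with a maximal prefix of the unmatched jobs), and your optimality argument --- normalize an optimum by exchange so that machines with larger load receive more tolerant jobs in consecutive blocks, then show by induction that the greedy's prefix sums $G_i$ dominate the optimum's prefix sums $L_i$ --- is essentially the paper's own proof, merely organized as a prefix-sum dominance over sorted machine loads instead of the paper's induction on the number of machines. The argument is correct.
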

\begin{proof}
    Algorithm~\ref{alg:greedy_uniform} below produces a maximum PD-matching.

\begin{algorithm}[ht]
\caption{Optimal Algorithm for $\I^{Udep}$ where and $\forall j \in U, V_j=V$}
\begin{algorithmic}[1]
\label{alg:greedy_uniform}
\STATE Sort the vertices in $U$ such that $b_1 \geq \dots \geq b_n$.
\FOR{$i=1$ to $m$}
\STATE Assign a maximal prefix of unassigned vertices in $U$ to vertex $i$. Formally: 
\STATE Let $k$ be the maximal s.t $b_u \geq k$ for each $u$ among the first $k$ unmatched vertices in $U$.
\STATE Add $(u, i)$ to $M$, for each $u$ among the first $k$ unmatched vertices in $U$.
\ENDFOR
\end{algorithmic}
\end{algorithm}

First, the choice of $k$ in step $3$ guarantees that the algorithm returns a valid PD-matching. 
Note that with $U$-dependent bounds, the machines are identical. We show by induction on $m$, that the PD-matching produced by the algorithm matches a maximal number of jobs to $m$ machines.
The base case refers to $m=1$. Since the algorithm matches the jobs with highest tolerance in the first iteration, a maximal number of jobs is matched to a single machine.

For the induction step, consider any PD-matching $M'$ to $m$ machines. Assume, w.l.o.g., that machine $m$ is the machine for $d_m(M')$ is minimal. Let $j$ be the least-tolerant job matched in $M'$. Recall that the jobs are sorted such that $b_1 \geq \dots \geq b_n$. Therefore, $b_j \le b_{|M'|}$.
The tolerance of $j$ implies that $d_m(M') \le b_j \le b_{|M'|}$.
Denote by $L_{m-1}$ and $L'_{m-1}$ the total number of jobs assigned to the first $m-1$ machines in $M$ and $M'$, respectively. 
By the induction hypothesis, $L_{m-1} \ge L'_{m-1}$.
Let $k=b_{|M'|}-(L_{m-1} - L'_{m-1})$. Consider the $m$-iteration of the algorithm. At least $k$ jobs with tolerance at least $k$ are available, and will be matched to the $m$-th machine in $M$. Therefore, $|M| \ge L_{m-1} +k = b_{|M'|} + L'_{m-1} \ge |M'|$.
\end{proof}

Consider next instances with arbitrary $V_j$ sets. Recall that the special case where for all $j$ we have $b_j=1$ is the classical bipartite maximum matching problem, which is efficiently solvable.
We show that even a slight relaxation of the unit-tolerance assumption makes the problem intractable, and hard to approximate. Specifically, we show that the problem is APX-hard in case of restricted assignment and $U$-dependent job tolerances in $\{1,2\}$. 
\begin{theorem}
\label{thm:restricted12}
Let $I$ be an instance of $\I^{Udep}$ in which
for all $j \in U$, $b_j \in \{1,2\}$. Computing a maximum PD-matching in $I$ is APX-hard.
\end{theorem}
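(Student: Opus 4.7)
The plan is to prove APX-hardness by giving an L-reduction from a known APX-hard problem. A natural candidate is Maximum $3$-Dimensional Matching ($3$DM) in its bounded-occurrence variant, which is APX-hard. The choice is motivated by the following structural observation: restricted to only tolerance-$1$ jobs the problem is classical bipartite matching, and restricted to only tolerance-$2$ jobs it is bipartite $b$-matching with $b_v=2$; both are polynomial-time solvable. Hence any hardness proof must genuinely exploit the interaction between the two tolerance values, so the reduction cannot be ``one-sided''.

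The construction I have in mind uses the two ``modes'' in which a machine operates under $b_j\in\{1,2\}$: either a single slot filled by any one compatible job, or a double slot filled by a pair of tolerance-$2$ jobs. Given a $3$DM instance with element sets $X\cup Y\cup Z$ and triples $T$, I would introduce, for each triple $t=(x,y,z)\in T$, a small gadget of tolerance-$2$ ``triple jobs'' whose only profitable pairing lies on a dedicated machine $m_t$, and for each element $e\in X\cup Y\cup Z$ a tolerance-$1$ ``element job'' whose compatibility set $V_e$ consists of the machines of triples containing $e$. The compatibility sets would be arranged so that ``selecting'' a triple in the $3$DM sense corresponds, in the PD-matching, to pairing its two gadget jobs on $m_t$, which in turn releases element-machines where the corresponding element jobs of $x,y,z$ can be singleton-matched; conversely, not selecting $t$ leaves its two gadget jobs unmatched. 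With an appropriate number of baseline jobs and machines added to stabilize the counting, the size of a maximum PD-matching would become an affine function of the size of a maximum $3$-dimensional matching, as required for an L-reduction.

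The main obstacle, as I see it, is the gadget design. Since a single machine can host at most two tolerance-$2$ jobs (not three), the three-way commitment of a $3$DM-triple cannot be encoded by a single pair-match; the commitment has to be distributed over several machines and glued together by tolerance-$1$ testers. The delicate point is to ensure that a \emph{partial} commitment to a triple, e.g.\ pairing only a subset of the triple's gadget jobs or matching only some of its element jobs, is never strictly better in total PD-matching size than either committing fully to the triple or not at all. Establishing this domination property across all local gadget configurations is what will force every maximum PD-matching to decompose canonically into a fixed baseline plus one gadget per chosen $3$-matching triple, and it is therefore the core technical step needed to yield the claimed APX-hardness bound.
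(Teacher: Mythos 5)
You have the right target (an $L$-reduction from bounded-occurrence $3$DM, which is also the paper's choice) and the right structural intuition that hardness must come from the interaction of the two machine ``modes'' (one tolerance-$1$ job alone versus a pair of tolerance-$2$ jobs). But the proposal stops exactly where the proof has to start: you explicitly leave the gadget design open, and the obstacle you identify --- that a single pair-match cannot encode a three-way commitment --- is the whole content of the theorem. A plan that says ``design gadgets so that partial commitments are never profitable'' is not yet an argument, and your proposed job set (tolerance-$2$ ``triple jobs'' per triple plus one tolerance-$1$ ``element job'' per element of all of $X\cup Y\cup Z$, living on the same triple-machines) is in tension with itself: a tolerance-$1$ element job placed on $m_t$ forces $d_{m_t}=1$ and thereby blocks the very pairing that is supposed to signal selection of $t$, and it is unclear what ``releases element-machines'' means when $V_e$ was defined to contain only triple-machines.

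The missing idea, as the paper implements it (adapting the Lenstra--Shmoys--Tardos makespan-$2$ reduction), is to break the symmetry between the three coordinates. Only the elements of $Y\cup Z$ get tolerance-$2$ jobs (one each, allowed on the machines of the triples containing them); a selected triple $T_i=(x_a,y_b,z_c)$ is encoded by pairing $y_b$ and $z_c$ on machine $i$. The third coordinate is enforced purely by counting: for each $x_a$ occurring in $t_a$ triples one introduces $t_a-1$ tolerance-$1$ dummy jobs allowed only on those $t_a$ machines. Since a tolerance-$1$ job must sit alone on its machine, matching all dummy jobs of type $a$ occupies $t_a-1$ of these machines and leaves exactly one free to host a pair, so at most one triple per $X$-element can be selected; with $m+k$ jobs in total ($m-k$ dummies plus $2k$ element jobs), every machine holds $0$, $1$ or $2$ jobs, giving $|M|\le m+|g(M)|$ and hence the $L$-reduction constants ($\alpha=4$, $\beta=1$) directly, with no case analysis of partial gadget configurations. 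Without this (or an equivalent) concrete construction and the accompanying counting bound, your write-up does not establish APX-hardness.
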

\begin{proof}
    In order to prove APX-hardness, we use an $L$-reduction \cite{PY91}, defined as follows:
\negB\begin{definition}
    \label{def:apx}
    Let $\Pi_1, \Pi_2$ be two optimization problems. We say that $\Pi_1$ $L$-reduces to $\Pi_2$ if there exist polynomial time computable functions $f,g$ and constants $\alpha, \beta >0$ such that, for every instance $I \in \Pi_1$ the following holds:
    \begin{enumerate}
        \item $f(I) \in \Pi_2$ such that $OPT(f(I)) \leq \alpha \cdot OPT(I)$.
        \item Given any solution $\varphi$ to $f(I)$, $g(\varphi)$ is a feasible solution to $I$ such that $|OPT(I) - value(g(\varphi))| \leq \beta \cdot |OPT(f(I)) - value(\varphi)|$.
    \end{enumerate}
\end{definition}

Our $L$-reduction is from $3$-bounded $3$-dimensional matching ($3$DM$3$). Our proof uses an idea from the hardness proof in~\cite{LST90}, for the following decision problem: Given an instance of scheduling on unrelated machines, does it have an instance with makespan at most $2$. The similarity between the two reductions highlights the similarity between PD-matching and the problem of scheduling on unrelated machines, both given by an $n \times m$ matrix.

The input to a $3$D-matching problem is a set of triplets $T \subseteq X \times Y \times Z$, where $|X|=|Y|=|Z|=k$. The number of occurrences of every element of $X \cup Y \cup Z$ in $T$ is at most $3$. The number of triplets is $|T| = m$. The goal is to find a maximal subset $T' \subseteq T$, such that every element in $X\cup Y \cup Z$ appears at most once in $T'$. This problem is known to be APX-hard \cite{Kann91}. Moreover, a stronger hardness result shows the problem has an hardness gap even on instances with a matching of size $k$ \cite{Pet94}, namely instances where there is a solution with $k$ triplets.

Given an instance $T \subseteq X \times Y \times Z$ of $3$DM$3$, we construct an instance $f(T)$ of $\I^{Udep}$ in which for every job $j$, $b_j \in \{1,2\}$.

The number of machines is $m=|T|$.
Machine $i$ corresponds to the triple $T_i$, for $1 \le i \le m$.
For every element in $Y \cup Z$ there is one element-job.
The tolerance of an element job is $b_j=2$, and its allowed set of machines consists of machines corresponding to triplets it belongs to.
For every $1 \le a \le k$ there are $t_a -1$ dummy jobs of type $a$. 
The tolerance of a dummy job of type $j$ is $b_j=1$, and its allowed set of machines consists of machines corresponding to triplets of type $j$.
Note that the total number of dummy jobs is $\sum_a (t_a-1) = m - k$, and all together, there are $2k+m-k=m+k$ jobs.

In order to complete the $L$-reduction, we prove the conditions of Definition \ref{def:apx} are met. As a preliminary, we show that if $T$ has a matching $T'$ of size $k$, then $OPT(f(T))=m+k$.

Let $T'\subseteq T$ be a matching of size $k$. We construct a PD-matching $M_{T'}$ of $f(T)$. For every triplet $T_i=(x_a,y_b,z_c)$ in $T'$, let $(y_b, i), (z_c,i) \in M_{T'}$. Additionally, the $t_a-1$ dummy jobs of type $a$ are matched to the other machines of type $a$, one to each. Since every $X$-job is alone on a machine corresponding to a triplet it belongs to, and $Y$-jobs and $Z$-jobs are paired on machines corresponding to a triplet they belong to, the tolerance constraints are all satisfied. 
As all jobs are matched, $OPT(f(T))=|M_{T'}| = m-k + 2k = m+k$.

Given a feasible PD-matching $M$, let $g(M)$ be the set of triplets corresponding to machines on which two jobs are assigned. The theorem follows from the following claims, showing that Definition \ref{def:apx} is satisfied for $\alpha=4$ and $\beta=1$.

\begin{claim}
\label{clm:apx_cond1}
If $OPT(T)=k$ then $OPT(f(T)) \leq 4k$.
\end{claim}
\begin{proof}
    Since every element of $X \cup Y \cup Z$ appears at most $3$ times in $T$, $m \leq 3k$. Thus, $OPT(f(T)) = m+k \leq 3k +k = 4k$.
\end{proof}

\begin{claim}
\label{clm:apx_cond2}
$OPT(T) - |g(M)| \leq (OPT(f(T)) - |M|)$.
\end{claim}
\begin{proof}
     Since we assume that $T$ has a perfect matching, we have that $OPT(T)=k$. Since we proved that $OPT(f(T)) = m+k$, it is sufficient to prove that $k-|g(M)| \leq m+k - |M|$. 
Consider a PD-matching $M$ of $f(T)$. Recall that every machine has $0, 1$ or $2$ jobs matched to it. Thus, $|M| \leq  2|g(M)| + (m-|g(M)|) = m+|g(M)|$. Therefore, $k -|g(M)| = m+k -(m + |g(M)|) \leq m+k - |M|$ as needed.
\end{proof}
\end{proof}

\negC
\subsection{The Class \texorpdfstring{$\I_{mono}^{Udep}$}{of Monotonous U-dependent Tolerances}}
For instances in $\I^{Udep}$, the definition of monotonicity reduces to the following: the $U$-vertices are ordered such that for every $j \le j'$ it holds that $b_j \le b_{j'}$ and $V_j \subseteq V_{j'}$, and the $V$-vertices are ordered such that for every $i \le i'$ it holds that $U_i \subseteq U_{i'}$.
Let $\I_{mono}^{Udep}$ denote the class of monotonous instances in $\I^{Udep}$. 
\begin{theorem}
    A maximum PD-matching of $I \in \I_{mono}^{Udep}$ can be found in time $O(n)$.
\end{theorem}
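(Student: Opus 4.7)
The plan is to adapt Algorithm~\ref{alg:greedy_uniform} to the monotonous $\I^{Udep}$ setting by exploiting the nested suffix structure of the $V_j$'s.

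Setup. In $\I_{mono}^{Udep}$, monotonicity forces each $V_j$ to be a suffix $\{s_j, s_j+1, \ldots, m\}$ of the machine order, with $s_1 \ge s_2 \ge \cdots \ge s_n$. Equivalently, $U_i$ is a suffix $\{r_i, r_i+1, \ldots, n\}$ of the job order, with $r_1 \ge r_2 \ge \cdots \ge r_m$; the eligibility sets nest as $U_1 \subseteq U_2 \subseteq \cdots \subseteq U_m$, with machine $1$ the most restricted. Combined with Observation~\ref{ob:mono}, which we may invoke since $\I_{mono}^{Udep} \subseteq \I_{mono}$, we may restrict attention to matchings in which the matched jobs form a suffix of the tolerance order.

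Algorithm. Sort the jobs once in non-increasing order of $b_j$ (by monotonicity this is essentially the reverse of the input order, and can be produced in $O(n)$). For $i = 1, 2, \ldots, m$, scan the sorted unmatched jobs that lie in $U_i$ and compute the maximum $a_i$ such that the top-$a_i$ of them all satisfy $b_j \ge a_i$; then assign these $a_i$ jobs to machine $i$. This is Algorithm~\ref{alg:greedy_uniform} applied in turn to each $U_i$, in order of increasing inclusiveness.

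Correctness will be shown by induction on the machine index. The base case is immediate: since machine $1$ can only host jobs from $U_1$, its load in any valid PD-matching is bounded by the greedy value $a_1$, which is optimal over $U_1$ in isolation by the single-machine argument underlying Algorithm~\ref{alg:greedy_uniform}. For the inductive step, suppose that after processing machines $1, \ldots, i-1$ the greedy has matched at least as many jobs as any optimal matching $M^*$ on those machines, and the multiset of tolerances of greedy's remaining unplaced jobs dominates that of $M^*$. At machine $i$, any job $j$ that greedy uses either lies on a higher machine in $M^*$ (in which case, by $U_i \subseteq U_{i'}$ for $i' > i$, we can swap within the nested suffix) or is a suitable replacement for some less-tolerant job of $M^*$ currently on machine $i$; in either case $|M^*|$ is preserved and the invariant extends to machine $i$. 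After $m$ steps we conclude $|M| \ge |M^*|$.

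The main obstacle is to formalize the swap at the inductive step, which rests on the nested-suffix property: jobs moved between machines with index $\ge i$ remain eligible, and the ``tolerance dominance'' invariant ensures that every greedy pick can be emulated in (a suitable rearrangement of) $M^*$. For the running time, maintain the unmatched jobs in a doubly linked list in the sorted order, together with a pointer tracking $r_i$ as $i$ grows so that the boundary between ineligible and eligible jobs advances monotonically. Each job is placed at most once, and the computation of $a_i$ inspects only the jobs that are placed on machine $i$ plus the single ``breaking'' job where the condition $b_j \ge a$ first fails; summing over $i$ the total work is $O(n)$.
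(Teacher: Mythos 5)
Your algorithm is essentially the paper's Algorithm~\ref{alg:greedy_mono_Urel} in different clothing: both fill the machines in increasing order of capability, giving each machine the most tolerant available eligible jobs until it is saturated, and your linked-list/pointer implementation does give the claimed running time (under the same implicit assumption $m\le n$ the paper makes). The problem is the correctness argument, and it sits exactly where you flag ``the main obstacle.'' First, the invariant you state is false. Take $n=10$, two machines, $V_j=V$ for all $j$, $b_1=\dots=b_5=5$ and $b_6=\dots=b_{10}=7$. The optimal matching $M^*$ that puts jobs $1,\dots,5$ on machine~$1$ and jobs $6,\dots,10$ on machine~$2$ is perfectly valid; after stage~$1$ greedy's unplaced jobs are $\{1,\dots,5\}$ with tolerances $(5,\dots,5)$ while $M^*$'s unplaced jobs are $\{6,\dots,10\}$ with tolerances $(7,\dots,7)$, so greedy's multiset does \emph{not} dominate. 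The invariant one can actually maintain is ``there exists an optimal matching that agrees with greedy on everything decided so far,'' but then the entire content of the proof is the rearrangement that restores this agreement after each greedy decision.

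Second, that rearrangement is not a single swap, and this is the step your sketch waves at. Suppose greedy matches $j$ to $i$ while $M^*$ has $(j,i')$ with $i'>i$. Moving $j$ to $i$ may force some job $j''$ currently on $i$ to move to $i'$; eligibility is fine (since $i\in V_{j''}$ and $V_{j''}$ is a suffix, $i'\in V_{j''}$), but the tolerance constraint $b_{j''}\ge d_{i'}$ can fail because $j''$ may be strictly less tolerant than every job $M^*$ placed on $i'$ and $d_{i'}(M^*)$ may exceed $b_{j''}$. The paper's Claim~\ref{cl:MoptMono} resolves this by redistributing the \emph{entire} set $W$ of jobs that $M^*$ places on $\{i,i'\}$ --- the $k$ most tolerant to $i$ with $k$ maximal, the rest to $i'$ --- and then arguing via $d_{i'}(\hat{M^*})\le d_{i'}(M^*)$ and $d_{i'}(\hat{M^*})\le d_i(M^*)$ that even the least tolerant job of $W$ survives on $i'$. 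Without this multi-job exchange (or an equivalent argument), the inductive step does not go through, so as written the proof has a genuine gap even though the algorithm itself is correct.
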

\begin{proof}
Algorithm~\ref{alg:greedy_mono_Urel} below returns a set of edges $M$.
The algorithm considers the jobs in decreasing order of tolerance. Every job is matched with the least capable machine on which it fits. If a machine is not feasible for a job, it is not considered by later jobs. 

\begin{algorithm}[ht]
\caption{Algorithm for maximum PD-matching in an instance in  $\I_{mono}^{Udep}$}
\begin{algorithmic}[1]
\label{alg:greedy_mono_Urel}
\STATE Sort the $U$-vertices according to tolerance, that is, $U_1\subseteq\cdot\cdot\cdot \subseteq U_m$ and $b_1\le \cdot\cdot\cdot \le b_n$.
\STATE Sort the $V$-vertices according to capability, that is $V_1 \subseteq\cdot\cdot\cdot \subseteq V_n$.
\STATE $j=n$ (most tolerant unassigned job).
\STATE $i=1$ (least capable machine).
\STATE $M = \emptyset$.
\REPEAT
\IF{$i \in V_j$ and $d_i(M)< b_j$} 
\STATE Add $(j,i)$ to $M$.
\STATE $j=j-1$.
\ELSE
\STATE $i=i+1$.
\ENDIF
\UNTIL{$j=0$ (all job are matched) or $i>m$ (no available machines).}
\STATE Return $M$.
\end{algorithmic}
\end{algorithm}

We prove that the algorithm returns a maximum PD-matching. First, we show that the matching is feasible. An edge $(j,i)$ is added to $M$ only if $i \in V_j$ and, before its addition, there are less than $b_j$ jobs matched to $i$. Thus, the tolerance-constraint of $j$ is met when it is assigned. Additional jobs may be matched to $i$ in later iterations. However, since the jobs are considered in non-increasing order of tolerance, each of the added jobs have tolerance at most $b_j$, guaranteeing that the tolerance-constraint of job $j$ is preserved even if additional jobs are matched to $i$. 

Next, we show that $M$ has maximum cardinality. The proof of the following claim is based on a non-standard exchange argument of possibly multiple jobs.
\begin{claim}
    \label{cl:MoptMono}
     $M$ is a maximum PD-matching.
\end{claim}
\begin{proof}
    Let $M^*$ be a maximum PD-matching. We prove that $|M^*|=|M|$. Specifically, we show that $M^*$ can be converted to $M$ by switching the assignment of some jobs, without decreasing the size of the matching or making it invalid.

As long as $M \neq M^*$, let $j$ be the least tolerant job whose assignment in $M$ and $M^*$ is different. Formally, each of the jobs $j+1,\ldots,n$ is matched to the same machine in $M^*$ and in $M$, and for some $i \neq i'$, we have that $(j,i) \in M$, but $(j,i') \in M^*$. Since $M$ and $M^*$ agree on the assignment of jobs $j+1,\ldots,n$, and the algorithm fills the machines in increasing order of capability, it must be that $i' > i$, so $i'$ is more capable than $i$. If $M^* \setminus \{(j,i')\} \cup \{(j,i)\}$ is a valid matching, we are done. 

Otherwise, consider the matching $\hat{M^*}$ that is constructed from $M^*$ as follows. All jobs matched to machines in $V\setminus \{i,i'\}$ remain unchanged. Let $W$ be the set of jobs matched to $i$ or $i'$ in $M^*$. In $\hat{M^*}$, the jobs of $W$ are reassigned to these two vertices: Let $k$ be the maximal number such that there are $k$ jobs in $W$ with $b_j \geq k$ and $i \in V_j$. The $k$ most tolerant jobs in $W$ are matched to $i$, and the remaining $|W|-k$ jobs are matched to $i'$. Note that $\hat{M^*}$  agrees with $M$ on $j, \ldots, n$.  

We prove that $\hat{M^*}$ is feasible. Clearly, the matching is valid for all jobs matched to machines in $V \setminus \{i,i'\}$. Furthermore, by the choice of $k$, it is valid for all jobs matched to $i$.  

If $k=|W|$, we are done. Otherwise, we show that $\hat{M^*}$ is valid for all jobs matched to $i'$: 
Since $k \ge d_i(M^*)$, we have $d_i(\hat{M^*}) \geq d_i(M^*)$. Since $|W|= d_i(M^*)+d_{i'}(M^*) = d_i(\hat{M^*}) + d_{i'}(\hat{M^*})$, the degree of $i'$ in $\hat{M^*}$ fulfills $d_{i'}(\hat{M^*}) \leq d_{i'}(M^*)$. 
Let $j'$ be the least tolerant job in $W$. Since $k < |W|$, job $j'$ is matched to $i'$ in $\hat{M^*}$. If $(j',i') \in M^*$, $j'$ clearly remains valid in $\hat{M^*}$ as $d_{i'} (\hat{M^*}) \leq d_{i'}(M^*)$. Otherwise, $(j',i) \in M^*$. The job-monotonicity implies that $i \in V_j$ for all jobs $j \in W$, thus, $k$ is at least as large as  $d_i(M^*)$. We conclude that $d_{i'}(\hat{M^*}) \leq d_i(M^*)$, and again, $\hat{M^*}$ is valid for $j'$.
Finally, since $j'$ is the least tolerant job in $W$, the job-monotonicity implies that $\hat{M^*}$ is valid for all jobs matched to $i'$ in $\hat{M^*}$.

By repeating the same method as long as $M \neq M^*$, we get an optimal matching that fully agrees with $M$.
\end{proof}

The time complexity analysis of the algorithm is straightforward.
We assume $m \le n$, as otherwise, it is possible to use only the $n$ most capable machines. Every iteration of the repeat loop involves an increase of $i$ or a decrease of $j$. Since the range of $i$ is $1,\ldots,n$ and the range of $j$ is $n,\ldots, 0$ there are at most $n+m$ iterations. Since each iteration takes a constant time, the whole loop takes $O(n+m)$ time.
\end{proof}

Recall that in Theorem~\ref{thm:mono_hard} we proved that maximum PD-matching is NP-hard for arbitrary monotonous instances. A natural question is to analyze the performance of Algorithm~\ref{alg:greedy_mono_Urel} if it is run on monotonous instances without $U$-dependent tolerances. Note that the algorithm is well-defined for any monotonous instance. We show that unfortunately, the approximation ratio of this algorithm is not better than $\frac{1}{2}$.

Consider an instance with $|U|=2k$ and $|V|=k+1$. Each of the first $k$ jobs have tolerance $1$ on all machines. Each of the $k$ last, more tolerant, jobs, has tolerance $1$ on machines $1,\ldots,k$, and tolerance $k$ on the last machine. 
A maximum PD-matching has size $2k$. Specifically, $M^*= \{(j,j)|1 \le j \le k\} \cup \{(j,k+1)|k+1 \le j \le 2k\}$. 
Algorithm~\ref{alg:greedy_mono_Urel} will end up with a matching of size $k+1$ consisting of $\{(2k-j+1,j)|1 \le j \le k+1\}$. Thus, it achieves $\frac{k+1}{2k}$-approximation.
Note that this example captures monotonous instances with tolerance limited to two values, $1$ and $k$.

\negA\section{Restricted Classes of Instances}
\label{sec:restricted}
In this section we analyze the maximum PD-matching problem on restricted classes of instances. For some classes we provide polynomial-time algorithms for computing a maximum PD-matching, while for others we prove it is NP-hard to compute such a matching.
\negC\subsection{Constant Number of Machines}
\label{sec:const_machines}
For a constant number of machines, $m=O(1)$, we present an optimal polynomial-time algorithm. The algorithm is based on guessing the degree of each machine in a maximum PD-matching, and reducing the problem to a bipartite $b$-matching problem.

\begin{theorem}
\label{thm:const_m}
    If $|V|=O(1)$, a maximum PD-matching can be found in polynomial time.
\end{theorem}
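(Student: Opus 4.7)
The plan is to exploit the fact that with a constant number of machines, we can afford to enumerate every possible degree profile $(d_1,\ldots,d_m)$ of the machines in a maximum PD-matching, since each $d_i\in\{0,1,\ldots,n\}$ gives at most $(n+1)^m = n^{O(1)}$ candidate vectors. For each guess, the key observation is that once the degree $d_i$ of each machine is fixed, the pair-dependent tolerance constraint $d_i(M)\le b(j,i)$ for every $(j,i)\in M$ decouples: edge $(j,i)$ is usable if and only if $b(j,i)\ge d_i$, and otherwise it is forbidden. This turns the problem into a plain bipartite $b$-matching on an auxiliary graph.

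Concretely, first I would enumerate all vectors $(d_1,\ldots,d_m)$ with $0\le d_i\le n$. For each vector, construct the bipartite graph $G' = (U\cup V, E')$ where $(j,i)\in E'$ iff $b(j,i)\ge d_i$. Then solve the $b$-matching problem on $G'$ with upper bound $1$ on every $U$-vertex and upper bound $d_i$ on each $V$-vertex $i$; this is polynomially solvable by a standard reduction to max-flow, or directly by the algorithm of Anstee~\cite{Ans87}. If the resulting $b$-matching has size exactly $\sum_{i=1}^m d_i$, keep it as a candidate; otherwise discard the guess. Finally, return a candidate matching of maximum size.

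For correctness, any $b$-matching $M$ found for the guess $(d_1,\ldots,d_m)$ is immediately a PD-matching, since every edge $(j,i)\in M$ satisfies $b(j,i)\ge d_i \ge d_i(M)$, and each job is matched at most once by the $U$-side bound. Conversely, if $M^*$ is a maximum PD-matching and we let $d_i^* = d_i(M^*)$, then every edge $(j,i)\in M^*$ satisfies $b(j,i)\ge d_i^*$, so $M^*$ itself is a feasible $b$-matching in the graph $G'$ corresponding to the guess $(d_1^*,\ldots,d_m^*)$, whose size is $\sum_i d_i^*=|M^*|$. Hence the algorithm will examine this guess and find a matching of size at least $|M^*|$, so the maximum over all guesses equals the optimum.

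For complexity, the number of degree vectors is $O(n^m) = n^{O(1)}$ because $m$ is constant, and for each vector we solve one polynomial-time $b$-matching instance, yielding an overall polynomial running time. The only mild subtlety, and the one place I would be most careful, is ensuring that we insist on the exact target size $\sum_i d_i$ (not merely feasibility): this is handled by either requiring the $b$-matching to saturate every $V$-vertex up to its bound $d_i$, or equivalently by simply comparing the size of a maximum $b$-matching in $G'$ with $\sum_i d_i$ and keeping the best across all guesses.
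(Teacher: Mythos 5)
Your proposal is correct and follows essentially the same route as the paper's proof: enumerate the $O(n^m)$ degree vectors $(d_1,\ldots,d_m)$, and for each one reduce to a bipartite $b$-matching on the graph with edges $\{(j,i) : b(j,i)\ge d_i\}$, bound $1$ on $U$-vertices and $d_i$ on machine $i$. Your extra care about insisting on the exact target size $\sum_i d_i$ (or equivalently noting that any $b$-matching in $G'$ is already a feasible PD-matching) is a sound handling of the one subtlety the paper glosses over with its "one-to-one correspondence" remark.
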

\begin{proof}
Let $(d_1, \dots, d_m)$ be a vector such that for all $1 \leq i \leq m$, $0 \leq d_i \leq n$, and $\sum_i d_i \le n$. We present a poly-time algorithm to decide whether there exists a PD-matching $M$ in which for every  $i \in V$, $d_i(M)=d_i$.

Given $(d_1, \dots, d_m)$, consider a $b$-matching problem on a graph $G'=(U \cup V, E')$. The set of edges is $\{(u_j, v_i)~|~ b(j,i) \geq d_i\}$. 
The $b$-matching bound of every $U$-vertex is $1$, and the $b$-matching bound of every vertex $i \in V$ is $d_i$.

It is easy to see that there is a one-to-one correspondence between a $b$-matching $M'$ in $G'$ and a PD-matching of size $|M'|$ in $G$. In particular, if the maximum $b$-matching in $G'$ has cardinality $\sum_i d_i$, then $G$ has a PD-matching in which for every machine $i \in V$ exactly $d_i$ jobs are matched to machine $i$.

The number of candidate vectors is bounded by $n^m$, and since $m = O(1)$, it is possible to find a PD-matching $M$ with a maximal $\sum_{i \in V} d_i$ value.
\end{proof}
\negC
\subsection{Constant Number of Tolerances}
\label{sec:limited}
Let $\mathcal{T} = \{b_{(j,i)}| j \in U, i \in V\}$ be the set of all jobs' tolerances on all machines.
In this section we analyze instances of PD-matching with limited $\mathcal{T}$.

Consider first the case of uniform tolerances, that is, $|\mathcal{T}|=1$. Let $\mathcal{T} = \{k\}$. The corresponding problem is trivially solved by matching $k$ arbitrary jobs to every machine. The cardinality of the resulting PD-matching is $\min(n,km)$, which is clearly optimal.

While this extreme case is trivially solvable, as we show, the problem becomes intractable already for $|\mathcal{T}|=2$. 

\negB\begin{theorem}
\label{thm:card2_hard}
    Let $\mathcal{T} = \{k_1,k_2\}$ where $k_1 \geq 1$ and $k_2 > \max(2, k_1)$. It is APX-hard to find a maximum PD-matching.
\end{theorem}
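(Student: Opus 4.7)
My plan is to give an $L$-reduction from bounded $3$-dimensional matching ($3$DM$3$), using the stronger result that $3$DM$3$ is APX-hard even when restricted to instances admitting a perfect matching of size $k$~\cite{Pet94}. The construction follows the template of Theorem~\ref{thm:restricted12} but replaces the forbidden entry $0$ of $\{0,1,2\}$ by suitable \emph{filler} jobs of tolerance $k_1$. Set $s:=\max(0,k_1-2)$; then $s\le k_2-3$ because $k_2\ge\max(3,k_1+1)$, so in particular $3+s\le k_2$. Given a $3$DM$3$ input $(X,Y,Z,T)$ with $|X|=|Y|=|Z|=k$ and $|T|=m\le 3k$, I would build $f(T)$ with one machine $M_i$ per triplet, an \emph{element-job} $\epsilon_e$ per $e\in X\cup Y\cup Z$ (tolerance $k_2$ on $M_i$ iff $e\in T_i$, tolerance $k_1$ otherwise), $s$ \emph{triplet-specific} jobs $\sigma_{i,1},\dots,\sigma_{i,s}$ per triplet $T_i$ (tolerance $k_2$ on $M_i$, tolerance $k_1$ elsewhere), and $k_1m$ \emph{dummy} jobs with tolerance $k_1$ on every machine. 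All tolerances lie in $\{k_1,k_2\}$.

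The main step is a structural bound on PD-matchings of $f(T)$. The only jobs with tolerance $k_2$ on $M_i$ are the three elements of $T_i$ and its $s$ triplet-specifics, so a machine loaded only with tolerance-$k_2$ jobs has degree at most $3+s$, while any machine carrying even one tolerance-$k_1$ job has degree at most $k_1<3+s$. Call $M_i$ \emph{perfect} in a PD-matching $M$ if $d_{M_i}(M)=3+s$; this forces all three elements of $T_i$ and all $s$ triplet-specifics of $T_i$ onto $M_i$, and since each element-job is matched at most once the set of perfect machines corresponds to a $3$-dimensional matching in $T$, so $p:=|\{i:M_i\text{ perfect}\}|\le\mathrm{OPT}(T)$. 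For $k_1\ge 2$ one has $2+s=k_1$, so every non-perfect machine has degree at most $k_1$ and hence $|M|\le p(3+s)+(m-p)k_1=k_1m+p$. For $k_1=1$ (so $s=0$) a non-perfect machine can still reach degree $2$ via two own elements; letting $q$ and $r_e$ count the non-perfect machines with $2$ and $1$ matched element respectively, the constraint $3p+2q+r_e\le 3k$ together with $p+q+r\le m$ yields $|M|=3p+2q+r\le 2p+q+m\le\tfrac{p+3k}{2}+m$.

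Finally I would verify the two conditions of Definition~\ref{def:apx}. Each upper bound is matched by an explicit PD-matching: place the three elements and $s$ triplet-specifics of every triplet in a given $3$D-matching onto its machine and fill every remaining machine with $k_1$ dummies (for $k_1=1$, instead pack leftover elements two per machine of a non-matched triplet). Thus on the APX-hard family with $\mathrm{OPT}(T)=k$, $\mathrm{OPT}(f(T))=k_1m+k$ (resp.\ $2k+m$), both at most $\alpha k$ for a constant $\alpha=3k_1+1$ (resp.\ $5$) using $m\le 3k$. Taking $g(M):=\{T_i:M_i\text{ perfect}\}$, the structural bounds give $\mathrm{OPT}(f(T))-|M|\ge k-p$ (resp.\ $(k-p)/2$), hence $\mathrm{OPT}(T)-|g(M)|\le\beta\cdot(\mathrm{OPT}(f(T))-|M|)$ with $\beta=1$ (resp.\ $\beta=2$). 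The hard part will be the $k_1=1$ case: since a non-perfect machine can legitimately host two own elements of a missed triplet, the per-machine cap $k_1=1$ does not suffice on its own, and one must count matched elements globally to rule out too much ``compensation'' by two-element non-perfect machines and still obtain a gap linear in $k-p$.
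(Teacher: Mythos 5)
Your proof is correct, but it takes a genuinely different route from the paper's. The paper reduces from $k_2$-bounded \emph{$k_2$-dimensional} matching: one machine per tuple, one tolerance-$k_2$ element-job per element (tolerance $k_1$ off its tuples), and $k_1(t-k)$ dummy jobs of tolerance $k_1$, so that a machine is ``full'' precisely when it hosts all $k_2$ elements of its tuple; the dimension of the source problem thus scales with $k_2$, and the $L$-reduction constants are $\alpha=dk_2$, $\beta=1$ uniformly. You instead keep the source problem fixed at $3$DM$3$ and absorb the slack between $3$ and $k_2$ into $s=\max(0,k_1-2)$ machine-specific filler jobs, so that a perfect machine has degree $3+s\le k_2$ while any non-perfect machine is capped at $\max(k_1,2+s)$. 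Your structural bound $|M|\le k_1m+p$ (for $k_1\ge 2$) is clean and makes the gap transfer immediate with $\beta=1$; the price is the separate $k_1=1$ analysis, where non-perfect machines can host two own elements and you need the global element count $3p+2q+r\le 3k$ to get $\beta=2$. That case is right, though two small points deserve tightening: you write $|M|=3p+2q+r$ but should state $|M|\le 3p+2q+r+D$ with $D\le m-p-q-r$ matched dummies (each dummy has tolerance $1$, so dummy-hosting machines have degree $1$ and are disjoint from element-hosting ones), which still yields $|M|\le 2p+q+m$; and the symbols $r_e$ and $r$ should be unified. Each approach buys something: yours rests only on the standard APX-hardness of bounded $3$-dimensional matching and mirrors Theorem~\ref{thm:restricted12} more closely, while the paper's avoids case analysis and filler gadgets at the cost of invoking hardness of $d$-dimensional matching for every $d$.
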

\begin{proof}
     Let $\mathcal{T} = \{k_1, k_2\}$ where $k_1 \geq 1$ and $k_2 > \max(2,k_1)$. We prove it is APX-hard to find a maximum PD-matching by an $L$-reduction from $d$-bounded $d$-dimensional matching ($d$DM-$d$).

    The input to $d$DM-$d$ is a set of tuples $T \subseteq X_1 \times \dots \times X_{d}$, where $|X_i|=k$ for every $1 \leq i \leq d$. The number of occurrences of every element of $X_1 \cup \dots \cup X_{d}$ in $T$ is at most $d$. The number of tuples is $|T| = t \ge k$. The goal is to find a subset $T' \subseteq T$, such that every element in $X_1 \cup \dots \cup X_{d}$ appears exactly once in $T'$. This problem is known to have an NP-hard gap even for instances with a solution of $k$ triplets \cite{Pet94}.

    To prove hardness for $\mathcal{T} = \{k_1,k_2\}$ where $k_1 \geq 1$ and $k_2 > \max(2,k_1)$, we consider an instance $T$ of $k_2$-bounded $k_2$-dimensional matching. Note that $T \subseteq X_1 \times \dots \times X_{k_2}$, where $|X_i|=k$ for every $1 \leq i \leq k_2$. We construct an instance of PD-matching $S$ with $k \cdot k_2 + k_1(t-k)$ jobs and $t$ machines. Every machine corresponds to a tuple in $T$. For every $w \in X_1 \cup \dots \cup X_{k_2}$ there is a corresponding job $j_w$. For every $w \in X_1 \cup \dots \cup X_{k_2}$ and every $\ell \in T$, for the corresponding job $j_w$ and machine $a_\ell$, $b(j_w, a_\ell)= k_2$ if $w \in \ell$ and $b(j_w,a_\ell)= k_1$ otherwise. There are also an additional $k_1(t-k)$ dummy jobs which have a tolerance of $k_1$ for every machine.

    \begin{claimstar}
        There is a PD-matching $M$ of size $|U|$ if and only if there is a subset $T' \subseteq T$ with $|T'|=k$ where every element in $X_1 \cup \dots \cup X_{k_2}$ occurs once.
    \end{claimstar}
    \begin{proof}
    Assume there is a matching $M$ of size $|U|$. Since there are $t$ machines, $k_1(t-k)$ dummy jobs with tolerance $k_1$ for every machine, and $k \cdot k_2$ non-dummy jobs, there are exactly $k$ machines without a dummy job matched to them in $M$. For each of those machines $a_\ell$ where $\ell \in T$ there are exactly $k_2$ non-dummy jobs with tolerance $k_2$ matched to $a_\ell$ - which must be the jobs corresponding to the $k_2$ elements of $\ell$. Thus, as there are exactly $k \cdot k_2$ jobs corresponding to elements of $X_1 \cup \dots \cup X_{k_2}$, all jobs are matched if and only if the $k$ tuples corresponding to the machines that the dummy jobs are not matched to are a $k_2$-matching.
    
    In the other direction, given $T' \subseteq T$ with $|T'|=k$ where every element in $X_1 \cup \dots \cup X_{k_2}$ occurs once, we can construct a PD-matching of size $|U|$. For every tuple $\ell \in T'$, all $k_2$ jobs corresponding to the elements of $\ell$ are matched to the machine $a_\ell$. The remaining $k_1(t-k)$ dummy jobs are matched in a balanced way to the remaining $t-k$ machines. Thus, all jobs are matched.
    \end{proof}

    Similarly to Theorem \ref{thm:restricted12}, as we know that all jobs can be matched in the constructed instance, we can prove the conditions of Definition \ref{def:apx} are satisfied for $\alpha=dk_2, \beta=1$.
\end{proof}

On the other hand, for the remaining cases of $|\mathcal{T}|=2$, we show a polynomial-time algorithm for computing a maximum PD-matching.
\negB\begin{theorem}
\label{thm:card2_alg}
Let $\mathcal{T}=\{0,k\}$ or $\mathcal{T} = \{1,2\}$. A maximum PD-matching can be found in polynomial time.
\end{theorem}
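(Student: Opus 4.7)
My plan is to handle the two cases of $\mathcal{T}$ separately, reducing each to a classical polynomial-time matching problem.

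For $\mathcal{T}=\{0,k\}$, any edge used in a PD-matching must have tolerance exactly $k$, and once zero-tolerance edges are discarded the PD constraint collapses to the uniform machine-side bound $d_i \le k$. I would therefore form the bipartite graph $G' = (U \cup V, E')$ with $E' = \{(j,i) : b(j,i) = k\}$ and compute a maximum bipartite $b$-matching with $U$-bounds $1$ and $V$-bounds $k$. Valid PD-matchings of the original instance coincide exactly with valid $b$-matchings of $G'$, so this gives the optimum in polynomial time.

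For $\mathcal{T}=\{1,2\}$, the key observation is that every job may be placed alone on every machine (since $b(j,i) \ge 1$ always), so the only real restriction is that a machine $i$ hosting two jobs must pick both from $B_i = \{j : b(j,i) = 2\}$. Writing any PD-matching as $|M| = 2p + s$ with $p$ ``paired'' and $s$ ``singleton'' machines and using $p+s\le m$, I would first establish the structural identity
\[
|M^*| \;=\; \min(n,\, m + p_{\max}),
\]
where $p_{\max}$ is the maximum number of machines that can be simultaneously paired with two disjoint $B$-jobs each (all jobs across pairs being distinct). The upper bound is immediate; the lower bound is constructive, since after placing any realization of $p_{\max}$ pairs one can drop distinct unmatched jobs onto the remaining empty machines (always feasible because $b(j,i)\ge 1$).

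The remaining step is computing $p_{\max}$ in polynomial time. My plan is to reduce it to maximum weight matching in a general graph $H$ with vertex set $U \cup \{i^{(1)}, i^{(2)} : i \in V\}$: ``real'' edges $(j, i^{(k)})$ of weight $1$ for each $j \in B_i$ and $k \in \{1,2\}$, and ``internal'' edges $(i^{(1)}, i^{(2)})$ of weight $3/2$ for each $i$. A local exchange argument shows that in any maximum weight matching every machine is either \emph{paired} (both slots matched to real jobs, weight contribution $2$) or \emph{internal} (the edge $(i^{(1)},i^{(2)})$ is in the matching, contribution $3/2$): if some machine has just one real slot matched and the other slot unmatched (contribution $1$), swapping $(j, i^{(k)})$ for $(i^{(1)}, i^{(2)})$ strictly increases the weight by $1/2$, and a similar argument excludes a machine with both slots unmatched. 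Consequently the maximum weight equals $\tfrac{3}{2}m + \tfrac{1}{2} p$ for the number of paired machines $p$, so maximizing weight maximizes $p$, identifying $p_{\max}$ along with a concrete set of pairs. Edmonds' weighted matching algorithm does this in polynomial time, and the pairs are then extended by singletons into a maximum PD-matching.

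The main obstacle I anticipate is making the dominance argument tight: the internal weight must be strictly between $1$ and $2$ so that the swap away from a ``single real slot'' configuration is always strictly improving, and one must verify that the swap produces a valid matching (which it does, since it only removes one incident edge at $j$ and adds a disjoint internal edge). The rest is routine once the structural identity $|M^*| = \min(n, m + p_{\max})$ and the reduction to weighted general matching are in place.
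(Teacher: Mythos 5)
Your proof is correct and, at its core, takes the same route as the paper: the $\mathcal{T}=\{0,k\}$ case is dispatched identically (discard zero edges, reduce to bipartite $b$-matching with $V$-bounds $k$; the paper phrases this as membership in $\I^{Vdep}$ and invokes Theorem~\ref{thm:Vdep}), and for $\mathcal{T}=\{1,2\}$ both arguments rest on the same gadget --- two slot-vertices $v_i^1,v_i^2$ per machine joined by an internal edge, with real edges only to the tolerance-$2$ jobs --- and on a reduction to matching in a general (non-bipartite) graph. The difference is in how the optimal number of paired machines is extracted. The paper pads the gadget with $n-2x$ dummy vertices, tests for a \emph{perfect} matching in $G_x$, and binary-searches over $x$; you instead put weight $3/2$ on the internal edges and weight $1$ on the real edges and run a single maximum-weight matching computation, with a local exchange argument showing every machine ends up either fully paired or internally matched, so the optimum weight is $\tfrac{3}{2}m+\tfrac12 p_{\max}$. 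Your version buys a one-shot algorithm (no parametrized search) and makes the final count explicit through the clean identity $|M^*|=\min(n,\,m+p_{\max})$, which the paper states only informally (and with a garbled expression, ``$\min(n, 2x+(mx)$''); the paper's version buys the ability to use an unweighted perfect-matching oracle. Both the exchange step (removing one real edge and inserting the disjoint internal edge is always a valid strict improvement) and the upper bound $|M|=2p+s\le m+p_{\max}$ check out, so there is no gap.
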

\negB\begin{proof}
If $\mathcal{T}=\{0,k\}$, the instance is a special case of the class $\I^{Vdep}$, for which, as shown in Theorem~\ref{thm:Vdep}, an optimal efficient algorithm exists.

For the case $\mathcal{T}=\{1,2\}$, we describe an algorithm, adapted from~\cite{Sch83}, that decides, for a given $x$ if there is a PD-matching where exactly $x$ machines have $2$ jobs matched to them. 

Given $x$, construct a graph $G_x=(V_x,E_x)$. $V_x = U \cup \{v_i^1, v_i^2~|~i \in V\} \cup \{w_i~|~ 1 \leq \ell \leq n-2x\}$ where we denote vertices $u \in U$ as job-vertices, $v$-vertices as machine-vertices, and $w$-vertices as dummy-vertices. $E_x = \{(u,v^1_i), (u,v^2_i)~|~u \in U, i \in V, b(j,i) = 2\} \cup \{(v_i^1, v_i^2)~|~ i \in V\} \cup \{(u,w_\ell)~|~u \in U, 1 \leq \ell \leq n-2x\}$. 

Recall that a perfect matching in $G_x$ is a subset of the edges, such that every vertex is in exactly one edge. Assume there is a perfect matching $E'_x \subseteq E_x$ in $G_x$. There are $n-2x$ dummy-vertices which must be matched to job-vertices, so a perfect matching corresponds to a valid PD-matching in $G$ of size $2x$, which are the jobs corresponding to job-vertices matched to machine-vertices in $G_x$. Additionally, since in a perfect matching all vertices are matched, for every machine $i$, its corresponding machine-vertices $v_i^1, v_i^2$ must either be matched to some vertices $u_1, u_2 \in U$ where $b(u_1, i)=b(u_2,i)=2$, or be matched to each other. Thus, in the corresponding PD-matching in $G$, all machines have either $2$ or $0$ jobs matched to them.

In the other direction, given a PD-matching $M$ in $G$ with $2x$ satisfied jobs, such that every machine has either $0$ or $2$ jobs matched to it,  the corresponding graph $G_x$ has a perfect matching. Formally, for every two jobs $u_1, u_2 \in U$ and machine $i$ such that $(u_1, i), (u_2,i) \in M$, $(u_1, v_i^1), (u_2,v_i^2) \in E'_x$. For every machine $i$ where $d_i(M)=0$, $(v_i^1, v_i^2) \in E'$, and for every unmatched job $j$ $(v_j, w_\ell) \in E'_x$ for some distinct $1 \leq \ell \leq n-2x$.

For every $x$, if there is a perfect matching in $G_x$, in the corresponding PD-matching $M$ in $G$, all $2x$ matched jobs have tolerance $2$ for the machine they are matched with. Since all remaining jobs have tolerance at least $1$ on all machines, if every unmatched job is matched to some machine with degree $0$, there are $min(n, 2x + (mx) = min(n,m+x)$ matched jobs in the resulting PD-matching. Binary search can be applied to find the maximal value of $x$ for which $G_x$ has a perfect matching. By the above discussion, the corresponding PD-matching is maximal.
\end{proof}

We are left to consider instances with $|\mathcal{T}| \geq 3$. 

\negB\begin{theorem}
     If $|\mathcal{T}|\geq 3$, it is APX-hard to find a maximum PD-matching.
\end{theorem}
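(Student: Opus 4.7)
The plan is a case split on $\mathcal{T}$ followed by L-reductions from the two APX-hardness results already proven in the excerpt (Theorems \ref{thm:restricted12} and \ref{thm:card2_hard}).

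First I would prove a short structural observation: for every $\mathcal{T}$ with $|\mathcal{T}| \geq 3$, either $\mathcal{T} = \{0,1,2\}$, or $\mathcal{T}$ contains a pair $\{k_1, k_2\}$ with $k_1 \geq 1$ and $k_2 > \max(2, k_1)$ (the hypothesis of Theorem \ref{thm:card2_hard}). The argument is by contrapositive: if no such pair exists, then every element of $\mathcal{T}$ is at most $2$, and combined with $|\mathcal{T}| \geq 3$ this forces $\mathcal{T} = \{0,1,2\}$.

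The case $\mathcal{T} = \{0,1,2\}$ is immediate from Theorem \ref{thm:restricted12}, whose construction produces APX-hard instances in $\I^{Udep}$ with $b_j \in \{1,2\}$, so the tolerance matrix takes values exactly in $\{0,1,2\}$. For the remaining case, I would take an APX-hard instance $I$ from Theorem \ref{thm:card2_hard} using tolerances $\{k_1, k_2\}$ and construct $I'$ by attaching, for each missing value $t \in \mathcal{T} \setminus \{k_1, k_2\}$, a small padding gadget consisting of one new job $j_t$ and one new machine $i_t$ with $b(j_t, i_t) = t$. Every cross-tolerance between a gadget vertex and the rest of the instance would be set to $0$ if $0 \in \mathcal{T}$, and to $k_1$ otherwise. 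This ensures that the tolerance set of $I'$ is exactly $\mathcal{T}$, while each gadget contributes only a bounded number of edges to any PD-matching.

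The final step is to verify the two conditions of Definition \ref{def:apx}. I expect $OPT(I') = OPT(I) + c$ up to an additive constant $c$ depending only on $\mathcal{T}$, and that the natural projection $g(M') := M' \cap (U \times V)$ of any PD-matching $M'$ of $I'$ onto the original vertex set is a valid PD-matching of $I$ with $|g(M')| \geq |M'| - C$ for some constant $C$. Since the APX-hardness of Theorem \ref{thm:card2_hard} comes with a constant gap and its construction satisfies $OPT(I) = \Omega(|I|)$, these bounds give an L-reduction with constant $\alpha$ and $\beta$.

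The main obstacle is the sub-case $0 \notin \mathcal{T}$: the padding gadgets cannot be perfectly isolated, because every cross-tolerance is at least $k_1 \geq 1$. A non-gadget job matched to $i_t$ caps that machine's degree at $k_1$, and the gadget job $j_t$ matched to an original machine similarly reduces its effective capacity. I expect the analysis to still go through because the total amount of such interaction is bounded by $|\mathcal{T} \setminus \{k_1, k_2\}|\cdot(k_1+1)$, which is a constant independent of the instance size, but carefully tracking these interactions to extract clean values of $c$, $C$, $\alpha$ and $\beta$ is the delicate step of the proof.
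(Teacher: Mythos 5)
Your proposal is correct and follows essentially the same route as the paper: the identical case split ($\mathcal{T}=\{0,1,2\}$ handled by Theorem~\ref{thm:restricted12}, and every other $\mathcal{T}$ with $|\mathcal{T}|\geq 3$ shown to contain a pair $k_1\geq 1$, $k_2>\max(2,k_1)$ and handled by Theorem~\ref{thm:card2_hard}). The only difference is your padding gadgets forcing the realized tolerance set to equal $\mathcal{T}$ exactly; the paper skips this step, implicitly reading the statement as hardness for instances whose tolerances are drawn from $\mathcal{T}$, so the two-valued instances of Theorem~\ref{thm:card2_hard} already suffice --- your extra construction is harmless (the $O(1)$ gadget interaction is absorbed into the $L$-reduction constants, since projecting a matching onto the original vertices only decreases degrees and loses at most $|\mathcal{T}|-2$ edges) but not needed.
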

\begin{proof}
    If $|\mathcal{T}| \geq 3$ and $\mathcal{T} \neq \{0,1,2\}$, then there must be $k_1, k_2 \in \mathcal{T}$ such that $k_1 > 1$ and $k_2 > \max(2,k_1)$. Therefore, Theorem \ref{thm:card2_hard} covers this case. 

    For $\mathcal{T} = \{0,1,2\}$, recall that in Theorem~\ref{thm:restricted12}, we proved APX-hardness for $U$-dependent tolerance, and $b_j \in \{1,2\}$.
    Clearly, allowing $b(j,i)=0$ is equivalent to restricting the assignment of $j$ to some $V_j \subseteq V$. Therefore, Theorem~\ref{thm:restricted12} covers a special case of $\mathcal{T} = \{0,1,2\}$.
\end{proof}

\negB\noindent{\bf Monotonous Instances with  $|\mathcal{T}|=3$:}

Recall that in a monotonous instance there is an order of the jobs and an order of the machines, such that for every jobs $j_1 \leq j_2$ and machines $i_1 \leq i_2$, $b(j_1, i_1) \leq b(j_1, i_2)$ and $b(j_1, i_1) \leq b(j_2, i_1)$. In this section we provide an optimal algorithm for monotonous instances with $|\mathcal{T}|=3$. 
Let $\mathcal{T}= \{k_1, k_2,k_3\}$ where $0\leq k_1 < k_2<k_3$. If $n \leq mk_1$ clearly all jobs can be matched, therefore, in the sequel we assume $n> mk_1$.

We begin with several useful claims on maximum PD-matchings in instances. The proofs are based on exchange arguments. 

\begin{claim}
    \label{clm:mono_limited_special}
    Consider a monotonous instance of PD-matching with $\mathcal{T} = \{k_1, k_2, \dots, k_{|\mathcal{T}|}\}$ such that $|\mathcal{T}| \geq 2$ and $0 \leq k_1 < k_2 < \dots < k_{|\mathcal{T}|}$. There is a maximum PD-matching $M'$ such that there are at most $|\mathcal{T}|-1$ machines $i$ with $d_i(M) \notin \mathcal{T}$.
\end{claim}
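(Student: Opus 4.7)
The plan is to start from an arbitrary maximum PD-matching $M$ and repeatedly perform a local exchange between pairs of machines, preserving $|M|$, until at most $|\mathcal{T}|-1$ machines have degree outside $\mathcal{T}$. Write $\mathcal{T}=\{k_1<k_2<\cdots<k_t\}$, call a machine $i$ \emph{bad} if $d_i(M)\notin\mathcal{T}$, and note that since every matched job has tolerance at most $k_t$ the degree of a bad machine must lie in one of the $t$ open gaps $[0,k_1),(k_1,k_2),\ldots,(k_{t-1},k_t)$; let $r(i)\in\{1,\ldots,t\}$ denote the index of that gap. The target is to leave at most one bad machine in each of the gaps indexed $2,\ldots,t$ and no bad machine in gap $1$, giving the bound $|\mathcal{T}|-1$.

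The main step is an exchange on two bad machines $i_1<i_2$ with $r(i_1)=r(i_2)=r\ge 2$. Since $d_{i_1}(M)>k_{r-1}$ and tolerances lie in $\mathcal{T}$, every job $j$ matched to $i_1$ satisfies $b(j,i_1)\ge d_{i_1}(M)>k_{r-1}$, forcing $b(j,i_1)\ge k_r$; by column monotonicity $b(j,i_2)\ge b(j,i_1)\ge k_r$, so jobs can be migrated from $i_1$ to $i_2$ as long as the cap $k_r$ at $i_2$ is respected. Letting $d=d_{i_1}(M)+d_{i_2}(M)$, I would redistribute so that $d'_{i_2}=\min(d,k_r)$ and $d'_{i_1}=d-d'_{i_2}$; the matching stays valid and of the same size, $i_2$ becomes good whenever $d\ge k_r$, and $d'_{i_1}$ is either in $\mathcal{T}$ or lies in a strictly lower gap $r'<r$. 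Define the potential $\Phi(M)=\sum_{i\text{ bad}}2^{r(i)}$: each exchange replaces the contribution $2\cdot 2^r$ by at most $2^r+2^{r'}$ with $r'<r$, so $\Phi$ strictly decreases. Since $\Phi$ is a non-negative integer, the process terminates at a matching $M'$ in which every gap of index $r\ge 2$ contains at most one bad machine.

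It then remains to rule out bad machines in the lowest gap $[0,k_1)$. If $k_1=0$ the interval is empty and we are done. Otherwise every job has tolerance at least $k_1$ on every machine, so a machine with $0<d_i(M')<k_1$ would admit any unmatched job, contradicting maximality of $M'$; thus only degree-$0$ machines can be bad in gap $1$. When every job is already matched I would appeal to the rectangle form of Observation~\ref{ob:mono} to place unused machines at the bottom, and then reapply the same migration argument to consolidate the remaining zero-degree machines onto higher-indexed ones without disturbing the invariant achieved for $r\ge 2$. The hard part will be precisely this last step: the pair-wise exchange handles $r\ge 2$ cleanly, but disposing of the lowest-gap machines in the case $|M|=n$ requires a careful global consolidation that is simultaneously compatible with the rectangle form and does not inflate any higher gap's bad count.
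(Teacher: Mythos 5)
Your overall strategy coincides with the paper's: classify the bad machines by which of the $t=|\mathcal{T}|$ degree-gaps $[0,k_1),(k_1,k_2),\ldots,(k_{t-1},k_t)$ they fall into, and use a pairwise exchange to reduce each gap to at most one bad machine. Your exchange for a gap $r\ge 2$ is sound, and in fact it moves jobs in the direction that monotonicity actually supports (from the less capable $i_1$ to the more capable $i_2$, where every involved job is guaranteed tolerance at least $k_r$); the paper's write-up moves jobs the other way and rests on the inequality $b(j,i_1)\ge k_\ell$ for jobs matched to $i_2$, which does not follow from monotonicity, so your version is cleaner on this point. One small inaccuracy: when $d\ge k_r$ the leftover degree $d-k_r$ need not lie in a strictly lower gap (e.g.\ $k_{r-1}=2$, $k_r=10$, $d_{i_1}=d_{i_2}=9$ gives leftover $8$, still in gap $r$), but your potential still strictly decreases because $i_2$ becomes good, so termination is unaffected.

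The genuine gap is the one you flag yourself: the lowest gap $[0,k_1)$. Your exchange deliberately drains $i_1$ to degree $0$ when $d<k_r$, manufacturing exactly the gap-$1$ machines you then cannot dispose of; your maximality argument eliminates them only when some job is unmatched, and in the case $|M'|=n$ "reapply the same migration" is not available, since jobs on a low-indexed machine need not tolerate the machine you would push them to. Both problems disappear with two changes that bring you in line with the paper's proof. First, use the section's standing assumption $n>mk_1$: it gives $|M|\ge mk_1$, so you can pre-normalize $M$ so that every machine has degree at least $k_1$ (every job tolerates $k_1$ on every machine, so topping up a deficient machine from one with degree above $k_1$ is always feasible and preserves cardinality), which empties gap $1$ before the consolidation starts. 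Second, cap the drain: move jobs from $i_1$ to $i_2$ only until $d_{i_2}=k_r$ or $d_{i_1}=k_{r-1}$, whichever comes first. Then both degrees remain in $[k_{r-1},k_r]$, at least one of the two machines becomes good, no machine ever re-enters $[0,k_1)$, and the number of bad machines strictly decreases, so no potential function and no separate endgame are needed.
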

\begin{proof}
     Let $M$ be a maximum PD-matching with at least $|\mathcal{T}|$ machines $i$ such that $d_i(M) \notin \mathcal{T}$. We can assume that no machine $i'$ has a degree less than $k_1$, as every job has a tolerance of at least $k_1$ on every machine, so jobs from other machines can be moved to $i'$. Thus, there is some $1<\ell\leq |\mathcal{T}|$ such that there are two machines, $i_1 \leq i_2$, for which both $k_{\ell-1} < d_{i_1}(M) < k_{\ell}$ and $k_{\ell-1} < d_{i_2}(M) < k_{\ell}$. Note that as $\mathcal{T}$ is the set of tolerances, and from monotonicity, for every job $j$ matched to $i_1$ or $i_2$, $k_\ell \leq b(j,i_1) \leq b(j, i_2)$.

    Thus, we construct a matching $M'$, by replacing matchings $(j, i_2)$ with $(j, i_1)$, until $d_{i_1}(M') = k_\ell$, or $d_{i_1}(M') = k_{\ell-1}$. Note that all jobs remain matched in $M'$. Thus, by repeating this process, a matching meeting the restrictions is reached.
\end{proof}

Moreover, we show that only the worst machines are matched to the $k_1$ worst jobs.

\begin{claim}
    \label{clm:mono_limited_worst}
    Consider a monotonous instance of PD-matching with $\mathcal{T} = \{k_1, k_2, \dots, k_{|\mathcal{T}|}\}$ such that $|\mathcal{T}| \geq 2$ and $0 \leq k_1 < k_2 < \dots < k_{|\mathcal{T}|}$. There is a maximum PD-matching $M'$ for which there is a machine $i'$ and a job $j'$, such that if $d_i(M') \leq k_1$, then $i \leq i'$, and the only jobs matched to such machines are jobs $j$ for which $j \leq j'$.
\end{claim}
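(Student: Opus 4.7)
The plan is to start from an arbitrary maximum PD-matching $M$ and transform it into a matching $M'$ of the same size via a two-phase exchange argument, exploiting monotonicity. The key facts are that $k_1$ is the minimum entry of the tolerance matrix (so $b(j,i)\ge k_1$ for every pair) and that $b$ is non-decreasing in both coordinates.

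In Phase 1, I would repeatedly choose machines $i_1<i_2$ with $d_{i_1}(M)>k_1$ and $d_{i_2}(M)\le k_1$ and swap the two groups of jobs assigned to them: reroute $(j,i_1)$ to $(j,i_2)$ for every $j\in\Gamma_{i_1}(M)$, and vice versa. Column-monotonicity keeps the $i_1$-jobs valid on $i_2$ (since $b(j,i_2)\ge b(j,i_1)\ge d_{i_1}(M)$), and the inequality $b(j,i_1)\ge k_1\ge d_{i_2}(M)$ keeps the $i_2$-jobs valid on $i_1$. The size $|M|$ is preserved and the potential $\sum_{i:\,d_i(M)\le k_1} i$ strictly decreases, so this phase terminates. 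At its conclusion, the set $L=\{i:d_i(M)\le k_1\}$ of low-degree machines forms a prefix of the machine ordering, so one may set $i'=\max L$ (or $i'=0$ if $L=\emptyset$).

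In Phase 2, I would iterate the following swap: while some job $j_1$ matched into $L$ admits a smaller-indexed candidate $j_2<j_1$ that is either unmatched or matched to a machine $i_2\notin L$, either exchange $(j_1,i_1)$ and $(j_2,i_2)$ (so $j_2$ lands on $i_1\in L$ and $j_1$ on $i_2$) or replace $(j_1,i_1)$ by $(j_2,i_1)$ and leave $j_1$ unmatched. Validity of $j_2$ on $i_1\in L$ follows from $b(j_2,i_1)\ge k_1\ge d_{i_1}(M)$, and validity of $j_1$ on $i_2$ follows from row-monotonicity $b(j_1,i_2)\ge b(j_2,i_2)\ge d_{i_2}(M)$. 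Crucially, no machine's degree changes in either swap, so the prefix structure established in Phase 1 is preserved. The potential $\sum_{j\text{ matched into }L} j$ drops by $j_1-j_2>0$ at each step, so Phase 2 also terminates. Setting $j'$ to the maximum index of a job matched into $L$ in the final matching (or $j'=0$ if no such job exists) yields the required bound.

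The main delicate point is to verify that jobs which are neither moved nor replaced by a given exchange remain feasible afterwards. This is automatic in both phases: Phase 1 transplants whole groups between two machines, and column-monotonicity only raises tolerances along the transplant direction; Phase 2 leaves every machine's degree unchanged, so the tolerance constraints of unmoved jobs are undisturbed. Monotonicity is used in both coordinates---column-monotonicity in Phase 1 and row-monotonicity in Phase 2---explaining why this clean structural conclusion relies on the monotonous setting.
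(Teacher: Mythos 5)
Your proof is correct and follows essentially the same route as the paper, which merely sketches the two exchange arguments (``the lowest degree machines are the least tolerated machines'' and ``the jobs matched to them are the lowest tolerance jobs''); your two phases make these exchanges explicit, with the feasibility checks ($b(j,i_2)\ge b(j,i_1)\ge d_{i_1}(M)$ for the transplanted group and $b(j,i)\ge k_1\ge d_i(M)$ for low-degree machines) and the decreasing potentials supplying the termination arguments the paper leaves implicit.
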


\begin{proof}
    Recall that the tolerance of every job is at least $k_1$. Thus, clearly there is a maximum PD-matching where every machine has a degree of at least $k_1$.

Consider a a maximum PD-matching $M'$. Due to the monotonicity, using an exchange argument we can assume the lowest degree machines are the least tolerated machines. Similarly, we can assume the jobs matched to them are the lowest tolerance jobs.
\end{proof}

The exchange arguments used for Observation~\ref{ob:mono}, Claim~\ref{clm:mono_limited_special}, and Claim~\ref{clm:mono_limited_worst} can be used sequentially on the same matching. Thus we have the following:

\begin{corollary}
    \label{cor:mono_organized}
    For every monotonous instance with tolerances $\mathcal{T} = \{k_1, \dots, k_{|\mathcal{T}|}\}$ such that $0 \leq k_1 < \dots < k_{|\mathcal{T}|}$, there is a maximum PD-matching $M'$ such that only the $j'$ most tolerant jobs are matched at a degree higher than $k_1$, they are matched to only the $i'$ most capable machines, all remaining machines are matched to $k_1$ jobs, and for every $\ell > 1$, there is at most one machine $i$ with $k_{\ell-1}<d_i(M)<k_{\ell}$.
\end{corollary}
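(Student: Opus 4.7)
The plan is to obtain the corollary by applying the exchange arguments behind Observation~\ref{ob:mono}, Claim~\ref{clm:mono_limited_special}, and Claim~\ref{clm:mono_limited_worst} in sequence to an arbitrary maximum PD-matching $M$, and then to verify that each step preserves the invariants established by the previous ones. Begin with a preliminary clean-up: since monotonicity guarantees that every job has tolerance at least $k_1$ on every machine, any machine with degree below $k_1$ can be loaded with a job pulled from any other machine without loss of size or feasibility. Thus we may assume $d_i(M)\ge k_1$ for all $i\in V$.

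Next, I would invoke the argument of Claim~\ref{clm:mono_limited_special} tier by tier: for each $\ell>1$, as long as two machines $i_1<i_2$ both have degree strictly between $k_{\ell-1}$ and $k_\ell$, shift jobs from $i_2$ to $i_1$ (using machine-side monotonicity to keep the assignment feasible) until one of them hits $k_{\ell-1}$ or $k_\ell$. These moves never push a degree below $k_{\ell-1}\ge k_1$, so the property from the clean-up is preserved, and afterwards at most one machine per tier has a non-standard degree. Third, I would apply the machine-relabeling exchange that underlies Claim~\ref{clm:mono_limited_worst}: whenever $i_1<i_2$ with $d_{i_1}(M)>d_{i_2}(M)$, swap the entire sets of jobs matched to $i_1$ and $i_2$; by monotonicity the result is still feasible, and the multiset of machine-degrees is unchanged, so the tier-invariant is preserved. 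This terminates (for instance, $\sum_i i\cdot d_i(M)$ strictly increases), after which the machines with degree $k_1$ form a prefix $\{1,\dots,i'\}$ in the capability order. Finally, the job-side exchange of Observation~\ref{ob:mono} lets me assume that the jobs matched at degree strictly greater than $k_1$ are exactly the $j'$ most tolerant ones, and the jobs matched to the degree-$k_1$ prefix of machines are the next least-tolerant block; this step alters neither machine-degrees nor the set of machines used, so no earlier invariant is disturbed.

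The main obstacle, and the thing that actually requires verification, is the compatibility of the three exchange arguments. The key observations are: the tier-reduction exchange moves jobs only between two machines within a single tier and leaves all other degrees untouched; the machine-side swap permutes whole fibers of $M$ between pairs of machines and so leaves the multiset of degrees invariant; and the job-side relabeling changes only which specific jobs sit in the matching, not the degree profile. Consequently, performing these exchanges in the order above preserves every previously established property, and the final matching simultaneously satisfies all four structural conditions stated in the corollary.
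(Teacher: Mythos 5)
Your overall strategy is exactly the paper's: the corollary is stated there with no separate proof beyond the remark that the exchange arguments of Observation~\ref{ob:mono}, Claim~\ref{clm:mono_limited_special}, and Claim~\ref{clm:mono_limited_worst} can be applied sequentially, and your explicit compatibility check (the tier-reduction touches only two degrees within one tier, the machine swap preserves the degree multiset, the job relabeling preserves degrees entirely) is a worthwhile elaboration of what the paper leaves implicit.

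One step, however, is over-claimed. In the machine-reordering phase you swap the \emph{entire} fibers of $i_1<i_2$ whenever $d_{i_1}(M)>d_{i_2}(M)$ and assert that monotonicity guarantees feasibility. It does not: a job $j$ formerly matched to the more capable machine $i_2$ satisfies only $b(j,i_1)\le b(j,i_2)$, and after the swap it must tolerate degree $d_{i_2}(M)$ on the \emph{less} capable machine $i_1$; if $d_{i_2}(M)>k_1$ nothing forces $b(j,i_1)\ge d_{i_2}(M)$ (e.g.\ one can have $b(j,i_1)=k_1<d_{i_2}(M)\le b(j,i_2)$), so the swapped matching can be infeasible. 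The whole-fiber swap is guaranteed feasible only when the smaller of the two degrees equals $k_1$, since $k_1$ is the global minimum tolerance and every job tolerates degree $k_1$ on every machine. Fortunately that restricted swap is all the corollary requires --- it asks only that the degree-$k_1$ machines form a prefix of the capability order, not that the full degree sequence be sorted by capability --- and after your clean-up step every degree is at least $k_1$, so limiting the exchange to pairs with $d_{i_2}(M)=k_1$ repairs the argument without disturbing any of the other invariants. With that correction the proof is sound and coincides with the paper's intended derivation.
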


We can now use this corollary, to describe and prove an algorithm for monotonous instance of PD-matching with limited tolerances.

Our algorithm consists of two steps. In the first step, we guess the matching size, and the machines that have a degree higher than $k_1$. We also guess at most $|\mathcal{T}|-1$ machines that have a degree that is not in $\mathcal{T}$, and their degree. In the second step, we greedily assign the jobs, starting from the worst job that needs to be matched, to the worst machine for which it has a maximal tolerance.

\begin{algorithm}[ht]
\caption{Algorithm for monotonous instances with tolerances $\mathcal{T} = \{k_1, k_2, k_3\}$}
\begin{algorithmic}[1]
\label{alg:mono_limited}
\STATE Guess $0 \leq n' \leq n$, and $0 \leq m' \leq m$.
            \STATE Let $\mathcal{U}$ be the set of $n'$ most tolerant jobs, and $\mathcal{V}$ be the set of $m'$ most capable machines.
            \STATE Guess a set $\mathcal{S}$ of at most $2$ machines, and a target degree $tar(i)$ for each $i \in \mathcal{S}$.
            \STATE Match the least tolerant jobs in $\mathcal{U}$ to each machine in $V \setminus \mathcal{V}$, until each has a degree of $k_1$.
            \STATE Assign the $tar(i)$ least tolerant unmatched jobs with a tolerance of at least $tar(i)$ to each machine $i \in \mathcal{S}$.
            \WHILE{Some job in $\mathcal{U}$ is unmatched, and some machine $i \in \mathcal{V}$ has $d_i = 0$}
                \STATE Let $j'$ be the least tolerant unmatched job in $\mathcal{U}$
                \STATE Choose a maximal $k \in \mathcal{T}$ such that there is some machine $i \in \mathcal{V}$ with no matched jobs, with $b(j', i) \geq k$.
                \STATE Let $i$ be the least-tolerated machine for which $b(j', i)\geq k$.
                \STATE Add $(j,i)$ to $M$ for the $k$ least tolerant unmatched jobs $j$.
            \ENDWHILE
\STATE Return the matching with a maximum number of matched jobs.
\end{algorithmic}
\end{algorithm}

\negB\begin{theorem}
\label{thm:monoT3}
    Algorithm \ref{alg:mono_limited} computes a maximum PD-matching for monotonous instances with tolerances $\mathcal{T}=\{k_1, k_2, k_3\}$, in polynomial time.
\end{theorem}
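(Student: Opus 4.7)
My plan is to prove Theorem~\ref{thm:monoT3} by combining the structural characterization in Corollary~\ref{cor:mono_organized} with an exchange-based analysis of the greedy completion phase. Let $M^*$ be a maximum PD-matching with the structure guaranteed by the corollary: $M^*$ matches exactly the $n^* = |M^*|$ most tolerant jobs; the bottom $m - m^*$ machines each receive exactly $k_1$ jobs; at most two machines $\mathcal{S}^*$ among the top $m^*$ have non-standard degrees $tar^*(\cdot)$; and every other machine in the top $m^*$ has degree in $\{k_1, k_2, k_3\}$. Since the algorithm enumerates all quadruples $(n', m', \mathcal{S}, tar)$, it tries $(n^*, m^*, \mathcal{S}^*, tar^*)$ in particular, so it suffices to show that for this correct guess the algorithm outputs a valid matching of size $n^*$.

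For the correct guess, Steps~4--5 mirror $M^*$'s placement on $V \setminus \mathcal{V}^*$ and on $\mathcal{S}^*$, and are feasible: job-monotonicity ensures that every job has tolerance at least $k_1$ on every machine, and the existence of $M^*$ itself witnesses the feasibility of the required target degrees on $\mathcal{S}^*$. The loop then reduces to a subproblem of assigning the remaining jobs of $\mathcal{U}$ to the empty machines in $\mathcal{V}^* \setminus \mathcal{S}^*$, each at degree in $\{k_1, k_2, k_3\}$. Feasibility of each greedy batch is immediate: by job-monotonicity, taking the $k$ least tolerant unmatched jobs as the batch guarantees tolerance at least $b(j', i) \ge k$ on the selected machine $i$.

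The heart of the proof is optimality. I plan to maintain the invariant that after each iteration of the while-loop there still exists a completion $\tilde{M}$ of the current partial assignment that matches all of $\mathcal{U}$. In the inductive step, compare the greedy's choice $(j', i, k)$ with the witness's placement of $j'$ on some machine $i^\dagger$ at degree $k^\dagger$. Because the greedy picks the largest feasible $k$, we have $k \ge k^\dagger$; because it breaks ties toward the least capable machine, we have $i \le i^\dagger$. An exchange argument then swaps the batches on $i$ and $i^\dagger$ inside $\tilde{M}$: capability-monotonicity gives $b(j, i) \le b(j, i^\dagger)$ for every affected job, while job-monotonicity lets the $k - k^\dagger$ additional jobs drawn to the batch on $i$ be the least tolerant available, still satisfying the tolerance constraint on $i$. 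Re-establishing the invariant may require cascading the exchange across one more machine, but never more than $|\mathcal{T}| - 1 = 2$ exceptional machines, which is consistent with Corollary~\ref{cor:mono_organized}.

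The principal obstacle is controlling this exchange: after moving the batch from $i^\dagger$ to $i$, the now-empty $i^\dagger$ must absorb jobs previously on $i$, and these propagated adjustments must preserve the admissible degree set $\{k_1, k_2, k_3\} \cup \{tar^*(i) : i \in \mathcal{S}^*\}$. This is precisely where the hypotheses $|\mathcal{T}| = 3$ and $|\mathcal{S}^*| \le 2$ become indispensable; I expect a short case analysis on the pair $(k, k^\dagger) \in \{k_1, k_2, k_3\}^2$ to suffice. For the running time, the enumeration has $O(n)$ choices for $n'$, $O(m)$ for $m'$, $O(m^2)$ for $\mathcal{S}$, and $O(n^2)$ for the target degrees $tar$, each followed by an $O(n m \log n)$ greedy phase, giving a total polynomial bound.
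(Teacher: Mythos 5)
Your plan follows essentially the same route as the paper: invoke Corollary~\ref{cor:mono_organized} to restrict attention to structured optima, observe that the enumeration of $(n',m',\mathcal{S},tar)$ hits the correct guess, and then prove optimality of the greedy completion by an exchange argument against a witness optimum, finishing with the same polynomial counting of guesses. The one concrete weak point is your claim that ``$i \le i^{\dagger}$'' follows from the tie-breaking rule: this holds only when $k = k^{\dagger}$. When the greedy finds a strictly larger feasible batch size $k > k^{\dagger}$, it restricts to machines with $b(j',\cdot)\ge k$, and the witness's machine $i^{\dagger}$ (which only needs $b(j',i^{\dagger})\ge k^{\dagger}$) may well be \emph{less} capable than $i$; then capability-monotonicity points the wrong way for the jobs that must be pushed back onto $i^{\dagger}$, and the exchange has to be argued differently. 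This is precisely the configuration the paper isolates as its hard case ($d_{i_M}(OPT)=k_3$, $d_{i_{OPT}}(OPT)=k_2$) and resolves by a sub-case analysis on $b(j,i_M)$, using the fact that the relevant machines are still empty so their roles can be swapped wholesale. Since your ``short case analysis on $(k,k^{\dagger})$'' is deferred rather than carried out, and that case analysis is where all the real work of the proof lives, you should treat it as the main remaining task rather than a routine check; once it is written out, your argument and the paper's coincide.
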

\begin{proof}
    We prove that the algorithm returns a maximum PD-matching. First, from Corollary \ref{cor:mono_organized}, there is a maximum PD-matching $M$ such that there exists $n' \leq n$ such that only the $n'$ most tolerant jobs are matched. Additionally, every machine $i \notin \mathcal{V}$ has $k_1$ of the least tolerant jobs matched to it, and all remaining jobs in $\mathcal{U}$ are only matched to machines in $\mathcal{V}$. For each of these machines, it has either $k_2$ or $k_3$ jobs matched to it, except at most $2$ machines. Thus, at least one guess of $n', m', \mathcal{S}$ corresponds to an optimal matching.

    For the corresponding guess, consider such an optimal PD-matching $OPT$ constructed by matching the jobs in increasing tolerance, and consider the PD-matching $M$ constructed by Algorithm \ref{alg:mono_limited}. We prove that $M$ is an optimal matching. Note that by the choice of $OPT$, all machines $i \in V \setminus \mathcal{V}$ have $d_i(OPT) = k_1$, and all other machines $i \in \mathcal{V}$ have $d_i(OPT) \in \{k_2, k_3\}$, except at most $2$ machines. Note that by definition, the jobs matched to machines in $(V \setminus \mathcal{V}) \cup \mathcal{S}$ are the least tolerant jobs, so from monotonicity we can assume the same jobs are matched to the same machines in $OPT$.

    If $M \neq OPT$, let $j$ be the least tolerant job that is matched to different machines in $M$ and $OPT$. Let $i_M, i_{OPT}$ be the machines it is matched to in $M, OPT$ respectively. Recall that we assume $OPT$ is constructed by matching the jobs in increasing order of tolerance, and note that as the machines are considered one at a time in the algorithm, $j$ is the first job that is matched to $i_{OPT}$ in the construction of $OPT$. We divide into cases.
    
    If $d_{i_{OPT}}(OPT) = k_2$ and $d_{i_M}(M)=k_3$, or if $d_{i_M}(OPT) = d_{i_{OPT}}(OPT)$, as the jobs are matched in order of increasing tolerance, $j$ can be switched with the next job matched to $i_M$ in $OPT$, resulting in a maximum PD-matching $M'$ with $j$ matched to $i_M$.

    We are left with the case $d_{i_M}(OPT) = k_3$ and $d_{i_{OPT}}(OPT)=k_2$. If $b(j,i_{M}) = k_3$, implying $d_{i_M}(M)= k_3$, since the jobs are matched in increasing tolerance, we can exchange $j$ with the next job matched to $i_M$ in $OPT$. Otherwise $b(j,i_{M}) = d_{i_M}(M)=k_2$, and by definition of Algorithm \ref{alg:mono_limited}, $i_M$ is the least capable machine which $j$ has a tolerance of $k_2$ for, so $i_{OPT}$ is less capable than $i_M$, and the degree of $i_M$ when $j$ is matched is $0$. Thus, both $i_M$ and $i_{OPT}$ have degree $0$ before $j$ is matched, so the PD-matching $OPT'$, in which $i_{OPT}$ and $i_M$ are switched, is a maximum PD-matching. Since $(j, i_M) \in OPT'$ and all jobs less tolerant than $j$ are matched to the same machine as in $OPT$, $M$ is a maximum PD-matching.

    Time complexity analysis: By Observation~\ref{ob:mono}, there are only $n$ guesses of $\mathcal{V}$, and $m$ guesses of $\mathcal{V}$. From Claim \ref{clm:mono_limited_special} there are only $\binom{m}{2}n^2$ guesses for $\mathcal{S}$. The greedy phase of the algorithm takes polynomial time. Thus, the whole algorithm is polynomial.
\end{proof}

\subsection{Constant Number of Job Types}
\label{sec:limited_types}
The next class for which we show that the problem is tractable consists of instances with a constant number of job types. Formally, we say that two jobs $j_1, j_2$ are of the same {\em type} $\tau$ if for every machine $i$, $b(j_1, i)=b(j_2,i)$. We denote by {\em type} the corresponding set of jobs. A type is characterized by a tolerance vector $\tau$, where $b^{\tau}_i = b(j,i)$ for all the jobs of type $\tau$.

Denote by $t(I)$ the number of job-types in an instance $I$ of PD-matching.
If $t(I)=1$ then we have $V$-dependent tolerances, for which an optimal algorithm is given in Section~\ref{sec:dep}.
For $t(I)=2$, we present an optimal algorithm whose time complexity is $O(mn)$, and for $t(I) \ge 3$ we present an algorithm whose time complexity is $O(t(I)(nm)^{poly(t(I))})$, which is polynomial for a constant number of types. Our algorithms are based on reducing the solution space to be considered as follows.

\negB\begin{claim}
\label{clm:unique_shared}
    For any PD-matching $M$, and any two types $\tau_1, \tau_2$, there exists a PD-matching $M'$ of size $|M|$, such that at most one machine in $M'$ is assigned jobs of both types.
\end{claim}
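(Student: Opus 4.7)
The plan is to prove the statement by an iterative local-exchange argument between pairs of ``mixed'' machines. For any PD-matching $N$ let $S(N) \subseteq V$ denote the set of machines that are assigned at least one job of type $\tau_1$ and at least one job of type $\tau_2$ in $N$. I will show that whenever $|S(N)| \ge 2$, one can produce a PD-matching $N'$ with $|N'| = |N|$ and $|S(N')| \le |S(N)| - 1$ by reshuffling jobs on just two machines; iterating at most $|V|$ times then yields the desired $M'$.

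Pick any two machines $i_1, i_2 \in S(M)$ and let $a_k$ (resp.\ $c_k$) be the number of type-$\tau_k$ jobs assigned to $i_1$ (resp.\ $i_2$) in $M$, so $d_1 := d_{i_1}(M) = a_1 + a_2$ and $d_2 := d_{i_2}(M) = c_1 + c_2$. Because $i_1, i_2 \in S(M)$, the feasibility of $M$ gives the four bounds $b^{\tau_k}_{i_j} \ge d_j$ for $j,k \in \{1,2\}$. I will define $M'$ by leaving every other machine untouched and reassigning only the $d_1 + d_2$ jobs currently on $\{i_1, i_2\}$: set $p := \min(a_1 + c_1,\, d_1)$, place $p$ of the type-$\tau_1$ jobs on $i_1$ and the remaining $(a_1 + c_1) - p$ on $i_2$, then fill $i_1$ up to degree $d_1$ with type-$\tau_2$ jobs and put the rest of the type-$\tau_2$ jobs on $i_2$. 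A short arithmetic check, using $a_1 + a_2 + c_1 + c_2 = d_1 + d_2$, shows that every count is nonnegative and that the degrees of $i_1$ and $i_2$ remain exactly $d_1$ and $d_2$.

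Feasibility of $M'$ is immediate from the four inequalities noted above, since the preserved degrees $d_1, d_2$ are each tolerated by both types on both machines. Hence $M'$ is a valid PD-matching with $|M'| = |M|$. Finally, by the choice of $p$, either $p = a_1 + c_1$, in which case $i_2$ contains no type-$\tau_1$ job, or $p = d_1$, in which case $i_1$ contains no type-$\tau_2$ job; either way at least one of $i_1, i_2$ leaves $S$, and no new machine enters $S$ because nothing outside $\{i_1, i_2\}$ was touched. Thus $|S(M')| \le |S(M)| - 1$, and iterating completes the proof. The main subtlety is the consistent feasibility of the reshuffle: the crucial observation that unlocks the argument is that a machine already being mixed certifies \emph{all four} tolerance bounds simultaneously, which is exactly what lets us swap type-$\tau_1$ and type-$\tau_2$ jobs between $i_1$ and $i_2$ in any proportion, subject only to the degree totals $d_1$ and $d_2$.
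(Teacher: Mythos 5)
Your proof is correct and follows essentially the same route as the paper's: both arguments pick two machines that are mixed with respect to $\tau_1,\tau_2$, observe that mixed-ness certifies all four tolerance bounds $b^{\tau_k}(i_j)\ge d_{i_j}(M)$, and redistribute the jobs of the two types between the two machines while preserving their degrees so that one of them becomes pure (the paper swaps the minimum of the four type-machine counts, you pack $i_1$ with as many $\tau_1$-jobs as its degree allows --- the same exchange in a slightly different parametrization, plus an explicit potential argument for termination). The only nit is that your identity $d_{i_1}(M)=a_1+a_2$ tacitly assumes $i_1,i_2$ carry no jobs of types other than $\tau_1,\tau_2$; reinterpreting $d_1,d_2$ as the counts of $\{\tau_1,\tau_2\}$-jobs on each machine (which is what must be preserved) fixes this without changing anything else.
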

\begin{proof}
    Let $M$ be a PD-matching in which two machines, $i_1$ and $i_2$, are matched to jobs from the same two types. Let $b^{\tau_1}, b^{\tau_2}$ be the tolerance vectors of the two types. For $\tau' \in \{\tau_1, \tau_2\}$ and $i' \in \{i_1,i_2\}$, let $\ell_M(\tau',i')$ be the number of jobs of type $\tau'$ that are matched to $i'$ in $M$. 

The feasibility of $M$ implies that for $i' \in \{i_1,i_2\}$, $d_{i'}(M) \leq min(b^{\tau_1}(i'), b^{\tau_2}(i'))$. Assume, w.l.o.g., that $\ell_M(\tau_1,i_1)$ is minimal among the four $\ell_M(\tau',i')$ values. We modify $M$ by swapping the assignment of the $\ell_M(\tau_1,i_1)$ jobs of type $\tau_1$ that are matched to $i_1$ and same number of jobs of type $\tau_2$ that are matched to $i_2$ in $M$. Clearly, the matching size does not change, and the tolerance constraints are fulfilled. In the resulting PD-matching, no job of type $\tau_1$ is matched to $i_1$. The process can be repeated as long as some pair of machines are both assigned jobs from the same two types.
\end{proof}

Intuitively, by the above claim, if $t(I)$ is a constant, then it is sufficient to allow only a constant number of machines to be assigned jobs of multiple types. Each of the remaining machines is matched only with jobs of a single type.

Our algorithm for $t$ types iterates over all options for selecting machines that process more than one job-type, and all options for the number of jobs of each type matched to them. For the remaining machines, we show that the problem is similar to a multidimensional knapsack problem with a constant number of dimensions, and we use dynamic programming to decide the job-type matched to each machine. 

\negB\begin{theorem}
\label{thm:maxsat-ttype}
Maximum PD-matching with $t$ job-types is solvable in time $O(t(mn^t)^{t^2-t+1})$.
\end{theorem}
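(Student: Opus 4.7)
The plan is to combine Claim~\ref{clm:unique_shared} with enumeration and dynamic programming. First, I would establish the structural fact that, for every instance with $t$ job-types, there exists a maximum PD-matching $M^*$ in which the set $S^*$ of \emph{shared} machines (those matched to jobs of more than one type) satisfies $|S^*| \le \binom{t}{2}$. This follows by iteratively applying Claim~\ref{clm:unique_shared} to each of the $\binom{t}{2}$ type pairs. The key compatibility check is that the swap argument in the claim's proof, applied to a pair $\{\tau,\tau'\}$, only redistributes jobs of types $\tau$ and $\tau'$ between two machines and, in fact, strips type $\tau$ (or $\tau'$) entirely from one of them; thus it cannot increase the number of machines sharing any other pair $\{\tau'',\tau'''\}$. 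So the successive reductions compose and yield the desired $M^*$.

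Given this structural property, the algorithm enumerates the relevant decisions on shared machines and then optimally assigns the rest via DP. The enumeration consists of: (i) guessing the set $S \subseteq V$ of shared machines, $|S|\le\binom{t}{2}$, contributing at most $(m+1)^{\binom{t}{2}}$ options; and (ii) for each $i \in S$, guessing the $t$-tuple $(c^\tau_i)_\tau \in \{0,\ldots,n\}^t$ specifying how many jobs of each type are matched to $i$, contributing at most $(n+1)^{t\binom{t}{2}}$ options. Feasibility of the partial assignment on $S$---that the total load $\sum_{\tau'}c^{\tau'}_i$ does not exceed $b^\tau_i$ for any $\tau$ with $c^\tau_i>0$, and that the demand on each type does not exceed its supply---is checked directly. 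The residual problem is to match the remaining jobs to machines in $V \setminus S$, each of which may be matched to jobs of at most one type.

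For the residual problem I would use a dynamic program whose state records the number of machines of $V\setminus S$ already processed together with the residual vector $(r_1,\ldots,r_t)\in\{0,\ldots,n\}^t$ of unmatched jobs per type. At each machine $i$, the DP considers $t+1$ options: skip $i$, or dedicate it to some type $\tau$ and place $\min(r_\tau, b^\tau_i)$ jobs there (any smaller placement on a single-type machine can only hurt the objective). This gives an $O(t\,m\,n^t)$ DP per enumeration guess. Multiplying the enumeration cost by the DP cost yields the stated bound $O(t\,(mn^t)^{t^2-t+1})$, with the exponent absorbing both the guesses over shared machines and their type-load vectors together with the final DP pass.

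The principal obstacle is the first (structural) step: verifying that iterative applications of Claim~\ref{clm:unique_shared} to different type pairs truly compose, i.e., that after reducing a pair $\{\tau,\tau'\}$ to at most one shared machine, a subsequent reduction for a different pair $\{\tau'',\tau'''\}$ never reintroduces a violation for $\{\tau,\tau'\}$. Once this commutation-style argument is in place, the enumeration and DP steps are routine; in particular, the greedy rule of placing $\min(r_\tau, b^\tau_i)$ jobs on a machine devoted to a single type is clearly without loss of generality, since unfilled capacity on such a machine can only be filled by jobs of the same type $\tau$ and never competes with any other type.
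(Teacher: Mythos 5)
Your proposal is correct and follows essentially the same route as the paper: apply Claim~\ref{clm:unique_shared} to bound the number of machines shared between type pairs, enumerate the set of shared machines together with their per-type load vectors, and solve the residual problem (each remaining machine dedicated to at most one type, filled to $\min(r_\tau, b^\tau_i)$) by a multidimensional-knapsack dynamic program over states $(i; r_1,\dots,r_t)$, giving the stated bound. Your explicit check that the pairwise swap arguments compose---the set of types present on each machine never grows under a swap, so fixing one pair cannot break another---is a detail the paper glosses over, but it does not constitute a different approach.
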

\begin{proof}
    We show that Algorithm \ref{alg:maxsat-ttype} solves the problem optimally.

\begin{algorithm}[ht]
\caption{Finding a maximum PD-matching with $t$ types of jobs $\tau_1, \dots, \tau_t$}
\begin{algorithmic}[1]
\label{alg:maxsat-ttype}
\STATE{Guess a set $\mathcal{S}$ of at most $t(t-1)$ machines.\label{alg:maxsat-ttype:machine-loop}}
\STATE{Guess a valid matching $M_{\mathcal{S}}$ of job types to $\mathcal{S}$.}
\STATE Denote the number of unmatched jobs of each type by $q_1, \dots, q_t$. W.l.o.g., denote the $m'$ machines not chosen in step \ref{alg:maxsat-ttype:machine-loop} by $v_1, \dots, v_{m'}$. 
\STATE Solve the following knapsack problem variant using dynamic programming:
\vspace{-0.5\baselineskip}
\begin{align*}
        (5.1) \quad & \forall i \quad K[i;0,\dots 0] = 0\\
        (5.2) \quad & \forall w_1 \leq q_1,\dots, \forall w_t\leq q_t \quad K[0;w_1, \dots, w_t] = 0\\ 
        (5.3) \quad &\forall i>0,\forall w_1 \leq q_1,\dots, \forall w_t\leq q_t\quad K[i; w_1, \dots, w_t] = \\
        & max (K[i-1; w_1, \dots, w_t], \\
        & K[i-1; w_1 - min(w_1,\tau_1(v_i)), w_2, \dots w_t] + min(w_1,\tau_1(v_i)),\\
        &\dots, \\
        & K[i-1;w_1, \dots, w_{t-1}, w_t - min(w_t,\tau_t(v_i))] + min(w_t,\tau_t(v_i)))\\
        (5.4) \quad &\text {Calculate $K[m'; q_1, \dots q_t]$}
      \end{align*}
      \vspace{-1.5\baselineskip}
\STATE Denote the matching corresponding to $K[m'; q_1, \dots, q_t]$ by $\hat{M_{\mathcal{S}}}$
\STATE Set $M = M_{\mathcal{S}}\cup \hat{M_{\mathcal{S}}}$. 
\STATE Return the matching $M$ with a maximum number of matched jobs.
\end{algorithmic}
\end{algorithm}

By Claim \ref{clm:unique_shared}, for every two types $\tau, \tau'$ there is at most one machine to which jobs of both types are matched. Therefore, there are at most $(m+1)^{t(t-1)}$ guesses of $\mathcal{S}$, as there are at most $t(t-1)$ {\em modular} machines, and each of them can be any one of the $m$ machines, or not exist at all.

For each choice of the modular machines $\mathcal{S}$, each modular machine may have $O(t)$ different job types matched to it. Since a total of at most $n$ jobs are matched to each of the machines, and the order does not matter, there are $O(n^t)$ ways to match jobs to each of the machines, for $O(n^{t^2(t-1)})$ guesses of a valid matching $M_{\mathcal{S}}$ of jobs types to the machines in $\mathcal{S}$. Note that indeed this combination of guesses for $\mathcal{S}$ and $M_{\mathcal{S}}$ checks every possible choice of modular machines, and every possible matching of jobs to them.

Since there are $n$ jobs in total, the remaining quantity of each job is $O(n)$. By Claim \ref{clm:unique_shared}, in all remaining machines only jobs of at most one type are matched. Note that if jobs of type $\tau$ are matched to a machine $i$, there is a maximum PD-matching where $\tau(i)$ jobs of type $\tau$ are matched to $\tau$ (or all remaining unmatched jobs).

Therefore, the algorithm proceeds to solve the problem as a variant of the multidimensional knapsack problem, with the remaining jobs of each type as the allowed capacity of each knapsack, and the machines as the items, with each item having a different weight and value. This weight and value, which depends on the type of the job, is the tolerance of the jobs of the type for the machine. Formally, $K[i, w_1, \dots, w_t]$ is the size of the maximum matching using $0\leq w_\ell \leq q_\ell$ jobs of type $\ell$ for every $0 \leq \ell \leq t$, and matching only to the first $i$ machines. Each job of type $\ell$ has a weight and value $\tau_\ell(v_i)$ on machine $v_i$. For $i=0$, and for every $w_1 \leq q_1, \dots, w_t \leq q_t$, the value of $K$ is $K[0;w_1,\dots,w_t]=0$. Then, for every $i>0$ we compute $K[i;\dots]$ using $K[i-1;\dots]$ by guessing which type $0\leq \ell \leq t$ maximizes is matched to machine $i$, matching $min(w_\ell, \tau_\ell(v_i))$ jobs of type $\ell$ to $v_i$, which is the maximal possible, We choose $\ell$ to be the type $\ell$ for which the value of $K$ is maximal, after matching the jobs.

Since all item weights are integers, the dynamic programming solution can be found in $O(tmn^t)$. Therefore, in total the running time is $O(tm^{t^2-t+1}n^{t(t^2-t+1)}) = O(t(mn^t)^{t^2-t+1})$ which is polynomial in $n$ and $m$.
\end{proof}

For the case of $t=2$, we present a more efficient algorithm.

\negB\begin{theorem}
\label{thm:maxsat-2type}
Maximum PD-matching with two job-types is solvable in time $O(mn)$.
\end{theorem}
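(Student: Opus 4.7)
The plan is to exploit Claim~\ref{clm:unique_shared}: some optimal PD-matching has at most one \emph{mixed} machine $i^{*}$ that receives jobs of both types, while every other machine is devoted to a single type. This structural property reduces the problem to two coupled decisions: (a)~selecting $i^{*}$ (possibly ``none'') together with the split of jobs placed on it, and (b)~partitioning the remaining machines into $S_1$ (type-$1$ only) and $S_2$ (type-$2$ only) so as to maximize $\min(q'_1, B_1) + \min(q'_2, B_2)$, where $B_1 = \sum_{i \in S_1} b^{\tau_1}(i)$, $B_2 = \sum_{i \in S_2} b^{\tau_2}(i)$, and $q'_1, q'_2$ denote the jobs of each type remaining after $i^{*}$ has been filled.

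For subproblem~(b) I would use a dynamic program with state $(i, B_1)$ where $B_1 \in \{0, 1, \dots, q'_1\}$: $\text{DP}[i][B_1]$ stores the maximum $B_2$ (truncated at $q'_2$) attainable over partitions of $\{1, \dots, i\}$ whose type-$1$ load is exactly $B_1$ (or $\ge q'_1$ when $B_1 = q'_1$). Each transition at machine $i$ is $O(1)$: either add $i$ to $S_1$ and push $B_1 \mapsto \min(q'_1, B_1 + b^{\tau_1}(i))$, or add $i$ to $S_2$ and push $B_2 \mapsto \min(q'_2, B_2 + b^{\tau_2}(i))$. The DP has $O(mn)$ cells and runs in $O(mn)$ time.

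To integrate the mixed machine without an extra factor of $m$ in the running time, I would use the following identity, easily verified by case analysis on which of the caps $q_1, q_2$ is already tight: given any pure partition with matched size $M$, letting machine $i^{*}$ serve as mixed adds exactly $\min\bigl(d_{i^{*}},\, q_1 + q_2 - M\bigr)$ jobs, where $d_{i^{*}} = \min(b^{\tau_1}(i^{*}), b^{\tau_2}(i^{*}))$. The reason is that the split of the $d_{i^{*}}$ mixed jobs between the two types is free, so it can always saturate whichever of $q_1, q_2$ still has slack. It therefore suffices to compute, for every $i^{*}$, the best pure matching $M_{-i^{*}}$ on $[m]\setminus\{i^{*}\}$, and return the maximum of $M_{-i^{*}} + \min(d_{i^{*}}, q_1 + q_2 - M_{-i^{*}})$ across all candidates $i^{*}$ (including the ``no mixed machine'' option).

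The main obstacle will be obtaining all the leave-one-out values $M_{-i^{*}}$ without re-executing the DP $m$ separate times. The plan is to run the DP in both directions---a forward sweep on prefixes and a backward sweep on suffixes, each in $O(mn)$---and, for each $i^{*}$, combine the two tables by a one-dimensional sweep over the shared truncated $B_1$-coordinate; exploiting the monotone (non-increasing) dependence of the stored $B_2$ values on $B_1$ should collapse each such combination to $O(n)$ work, yielding the claimed $O(mn)$ overall bound.
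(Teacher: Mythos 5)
Your structural reduction is the same as the paper's: invoke Claim~\ref{clm:unique_shared} to get at most one mixed machine, and handle the remaining machines as a pure two-way partition via dynamic programming over a table of size $O(mn)$. Your marginal-gain identity for the mixed machine is correct and clean: since the gain of a mixed machine over a pure partition of matched size $M$ is $\min(d_1,s_1)+\min(d_2,s_2)$ maximized over $d_1+d_2\le d_{i^*}$, which equals $\min(d_{i^*},\,q_1+q_2-M)$, and since $M\mapsto \min(q_1+q_2,\,M+d_{i^*})$ is non-decreasing, plugging in the best leave-one-out value $M_{-i^*}$ is legitimate. Where you diverge from the paper is in how the mixed machine is located: the paper classifies machines into $V_1,V_2$ by which type each prefers, disposes of the two cases where both types are over- or both under-subscribed relative to their preferred machines, and in the remaining case argues that all type-$1$ jobs are matched with the single leftover machine absorbing the residue, so one DP over the choice of the dedicated set $H\subseteq V_1$ suffices. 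You instead attempt a generic leave-one-out over all $m$ candidates via prefix and suffix DP tables.

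That last step is where your argument has a genuine gap. Combining the prefix table $F$ and suffix table $G$ for a fixed $i^*$ requires computing $\max_{b,b'}\bigl[\min(q_1,b+b')+\min(q_2,F[b]+G[b'])\bigr]$, a two-dimensional maximization over $O(n^2)$ pairs. The monotonicity of $F$ and $G$ alone does not collapse this to a one-dimensional sweep: for fixed $b$ the objective in $b'$ is the sum of a non-decreasing and a non-increasing term, so the inner maximizer need not move monotonically with $b$, and what you are really facing is a $(\max,+)$-convolution-type problem for which non-increasing inputs are not known to admit linear-time merging without further structure (e.g., concavity/Monge conditions you have not established). One can handle the regimes where one of the two caps is tight by an anti-diagonal or two-pointer argument, but the regime where neither cap binds needs a separate decomposition argument, and none of this appears in your proposal. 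As written, the correctness of the reduction stands, but the claimed $O(mn)$ bound rests on an unproven merge; without it you only get $O(m^2n)$ or $O(mn^2)$ by recomputing or merging naively. Either supply the merge argument or adopt the paper's case analysis, which pins the mixed machine inside $V_1$ and avoids leave-one-out entirely.
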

\begin{proof}
Let $V_1 = \{i \in V~|~\tau_1(i) > \tau_2(i)\}$, and its complement $V_2=\{i \in V~|~\tau_2(i) \leq \tau_1(i)\}$. That is, if matched to jobs of only a single type, machines in $V_\ell$ can process more jobs of type $\ell$.

Let $n_1, n_2$ be the number of jobs of type $\tau_1, \tau_2$, respectively. Assume first that for both $\ell=1,2$, it holds that $n_\ell \geq \sum_{i \in V_\ell} \tau_\ell(i)$. In this case, the maximum PD-matching $M$ is clearly a matching such that for every machine $i \in V_\ell$, $d_i(M)=\tau_\ell(i)$. Similarly, if for both $\ell=1,2$,  $n_\ell<\sum_{i \in V_\ell} \tau_\ell(i)$, then a PD-matching in which all jobs are matched is easy to obtain. Therefore, we assume, w.l.o.g., that $n_1 < \sum_{i \in V_1} \tau_1(i)$, and $n_2 \ge \sum_{i \in V_2} \tau_2(i)$.

In our algorithm, a set $H$ of machines in $V_1$ is matched only to jobs of type $\tau_1$. The remaining jobs of type $\tau_1$ are matched to a single machine $i \in V_1 \setminus H$, such that all jobs of type $\tau_1$ are matched. Then, jobs of type $\tau_2$ are matched to machines in $V_2$ and to machine $i$. We use dynamic programming with a table of size $O(mn)$ to select the set $H$.
\end{proof}



\section{Conclusions and Open Problems}
\label{sec:conclusions}
In this paper we extended the vast literature on bipartite matching by studying a setting with pair-dependent bounds. While this setting arises naturally in real-life applications, to the best of our knowledge, it has not been previously considered by the algorithms or scheduling communities. We analyzed the problem of maximizing the matching size with pair-dependent bounds, which corresponds to maximizing the system's throughput in a scheduling environment with time-sharing processing and jobs having machine-dependent tolerances.

The similarity to other matching problems, such as bipartite $b$-matching and scheduling on unrelated machines, and to other assignment problems provides a large selection of known results and techniques to draw inspiration from. However, as we showed, PD-matching varies from these similar models. For example, while maximum bipartite $b$-matching can be solved in polynomial time even in the weighted case, we showed that computing a maximum PD-matching is APX-hard even for instances with $U$-dependent bounds limited to $b_j \in \{1,2\}$. 

Our results demonstrate that a scheduling problem that is solvable in the sequential processing model, may become computationally hard when time-sharing processing is introduced. We also showed that having job-machine related parameters may not make a problem harder if the instance is monotonous, or otherwise limited, but increases the problem complexity in the general case, even if the corresponding $n \times m$  matrix includes only two values. An interesting direction for future work is to investigate the impact of time-sharing processing as well as other job-machine parameters on additional classical scheduling environments.

A natural extension of our problem is the case of weighted PD-matching. As weighted bipartite $b$-matching is solvable in polynomial time, some of our results can be easily extended to the weighted case. For other results, their extension to weighted edges is non-trivial and potentially interesting. 

The $(1-\frac 1 e)$-approximation algorithm due to the reduction to SAP~\cite{FGMS11} is valid also for weighted instances. Coming up with better approximation algorithms, in general or for restricted classes of instances, is an additional direction for future work. Alternatively, it may be possible to strengthen the hardness of approximation result in~\cite{FGMS11} in a way that captures PD matching.



\begin{thebibliography}{10}

\bibitem{FlowsBook}
R.~K. Ahuja, T.~L. Magnanti, and J.~B. Orlin.
\newblock {\em {Network Flows: Theory, Algorithms, and Applications}}.
\newblock Prentice Hall, 1993.

\bibitem{Ans87}
R.~P.~Anstee, 
\newblock A polynomial algorithm for $b$-matchings : an alternative approach.
{\em Inf. Process. Lett.}, 24:153--157, 1987.



\bibitem{BCS74}
J.~L.~Bruno, E.~G.~Coffman Jr., and R.~Sethi.
\newblock Scheduling independent tasks to reduce mean finishing time.
\newblock {\em Commun. {ACM}}, 17(7):382--387, 1974.


\bibitem{CK05}
C.S.~Chekuri, and S.~Khanna.
\newblock A polynomial time approximation scheme for the multiple knapsack problem. 
\newblock {\em SIAM J. Comput.} 35: 713-728,
2005.

\bibitem{CK+23}
L.~Chen, R.~Kyng, Y.~P.~Liu, R.~Peng, M.~Probst Gutenberg, and S.~Sachdeva. Almost-Linear-Time Algorithms for Maximum Flow and Minimum-Cost Flow. {\em Commun. ACM} 66(12):85–-92, 2023.

\bibitem{CK24}
J.~Chuzhoy, and S.~Khanna.
  \newblock Maximum Bipartite Matching in $n^{2+o(1)}$ Time via a Combinatorial Algorithm.
  \newblock \newblock In {\em Proceedings of the 56th Annual ACM Symposium on Theory of Computing (STOC)}, pages 83--94. ACM, 2024.


\bibitem{Chen16}
C.~Chen, L.~Zheng, V.~Srinivasan, A.~Thomo, K.~Wu, and A.~  Sukow.
\newblock Conflict-aware weighted bipartite $b$-matching and its application to
  e-commerce.
\newblock {\em IEEE Transactions on Knowledge and Data Engineering},
  28(6):1475--1488, 2016.

\bibitem{DP17}
S.~Dudycz and K.~E.~Paluch. Optimal general matchings. {\em ArXiv abs/1706.07418}, 2021.

\bibitem{FLN14}
J.~Fakcharoenphol, B.~Laekhanukit, and D.~Nanongkai. Faster algorithms for semi-matching problems. {\em ACM Trans. Algorithms} 10(3), article 14, 2014. 


\bibitem{FGMS11}
L.~Fleischer, M.~X.~Goemans, V.~ S.~Mirrokni, and Ma.~Sviridenko. Tight approximation algorithms for maximum separable assignment problems. Math. Oper. Res. 36(3):416-–431, 2011. 




\bibitem{GLLR79}
R.L. Graham, E.L. Lawler, J.K. Lenstra, and A.H.G.~Rinnooy Kan.
\newblock Optimization and approximation in deterministic sequencing and  scheduling: a survey.
\newblock {\em Ann. Discrete Math}, 5:287--326, 1979.

\bibitem{Hall86}
N.~G.~Hall. Scheduling problems with generalized due dates. {\em IIE Transactions}, 18(2), 220–222, 1986.

\bibitem{semi}
N.~J.~A.~Harvey, R.~E.~Ladner, L.~Lovász, and T.~Tamir,
Semi-matchings for bipartite graphs and load balancing,
{\em Journal of Algorithms},
59(1):53--78, 2006.

\bibitem{HK73}
J.E.~Hopcroft and R.M.~Karp, An $n^{5/2}$ algorithm for maximum matchings in bipartite graphs, SIAM Journal on Computing, 2(4):225-–231, 1973.

\bibitem{Horn73}
W.~A. Horn.
\newblock Technical note—minimizing average flow time with parallel machines.
\newblock {\em Operations Research}, 21(3):846--847, 1973.


\bibitem{Kann91}
V.~Kann.
\newblock Maximum bounded $3$-dimensional matching is max SNP-complete.
\newblock {\em Information Processing Letters}, 37:27--35, 1991.

\bibitem{KS95}
P~Kleinschmidt and H.~Schannath.
\newblock A strongly polynomial algorithm for the transportation problem.
\newblock {\em Mathematical Programming}, 68(1-3):1--13, 1995.

\bibitem{Law83}
E.L. Lawler.
\newblock Recent results in the theory of machine scheduling.
\newblock In A.~Bachem, M.~Groetschel, and B.~Korte, editors, {\em Mathematical
  Programming: The State of the Art}, pages 202--234,
   1982.

\bibitem{LST90}
J.~K.~Lenstra, D.~B.~Shmoys, and {\'E}.~Tardos.
\newblock Approximation algorithms for scheduling unrelated parallel machines.
\newblock {\em Mathematical Programming}, 46:259--271, 1990.

\bibitem{LR79}
J.K.~Lenstra and A.H.G.~Rinnooy~Kan.
\newblock Computational complexity of discrete optimization problems.
\newblock {\em Ann. Discrete Math.}, 4:121--140, 1979.

\bibitem{Lov09}
L.~Lov{\'a}sz and M.~D.~Plummer.
\newblock {\em Matching theory}, volume 367.
\newblock American Mathematical Soc., 2009.

\bibitem{MMS25}
B.~Mor, G.~Mosheiov and D.~Shabtay. Scheduling problems on parallel machines with machine-dependent generalized due-dates. 
{\em Ann Oper Res.}, 2025.


\bibitem{Moo68}
M.~J. Moore.
\newblock An n job, one machine sequencing algorithm for minimizing the number
  of late jobs.
\newblock {\em Management Science}, 15(1):102--109, 1968.

\bibitem{PY91}
C.~H.~Papadimitriou and M.~Yannakakis.
\newblock Optimization, approximation, and complexity classes.
\newblock {\em Journal of Computer and System Sciences}, 43(3):425--440, 1991.

\bibitem{Pet94}
E.~Petrank.
\newblock The hardness of approximation: gap location.
\newblock {\em Computational Complexity}, 4(2):133--157, 1994.

\bibitem{Pinedo08}
M.~Pinedo.
\newblock {\em Scheduling: theory, algorithms, and systems}.
\newblock Springer, 2008.

\bibitem{Sch83}
A.~Schrijver.
Min-max results in combinatorial optimization. {\em  Mathematical Programming, The State of the Art}. 
  pages 439--500. Springer, 1983.

\bibitem{ST03}
H.~Shachnai, and T.Tamir. Approximation schemes for generalized 2-dimensional vector packing with application to data placement. 
{\em Proc. of RANDOM-APPROX}, 2003.

\bibitem{ST93}
D.E.~Shmoys. and E.~Tardos.
\newblock An approximation algorithm for the generalized assignment problem. \newblock {\em Math. Programming} 62(3):461-–474. 1993.

\bibitem{TIR78}
S.~L.~Tanimoto, A.~Itai, and M.~Rodeh, Some matching problems for bipartite graphs. {\em J. ACM} 25(4):517--525, 1978. 
\end{thebibliography}

\normalsize

\ifdefined\inmainpaper
  \else
\small
\bibliographystyle{plainurl}

  \end{document}
\fi